\newtheorem{theorem}{Theorem}[section]
\newtheorem{lemma}[theorem]{Lemma}
\newtheorem{proposition}[theorem]{Proposition}
\newtheorem{corollary}[theorem]{Corollary}
\theoremstyle{definition}
\theoremstyle{definition}
\newtheorem{remark}[theorem]{Remark}
\newtheorem*{remark*}{Remark}
\numberwithin{equation}{section}
\newenvironment{proof-sketch}{\textit{Sketch of proof.~}}{\quad}
\newcommand{\R}{\mathbb{R}}
\newcommand{\C}{\mathbb{C}}
\newcommand{\N}{\mathbb{N}}
\newcommand{\Z}{\mathbb{Z}}
\newcommand{\n}{\mathbf{n}}
\newcommand{\vecteur}{\mathbf{v}}
\newcommand{\x}{\mathbf{x}}
\newcommand{\pa}{\partial}
\newcommand{\eps}{\varepsilon}
\renewcommand{\Re}{\mathrm{Re}}
\DeclareMathOperator{\RE}{Re}
\DeclareMathOperator{\ran}{ran}
\DeclareMathOperator{\Span}{span}
\DeclareMathOperator{\Sp}{Sp}
\begin{document}

%\commby{Jan Derezi\'nski.}
%\submitted{July 12, 2017.}
%\accepted{February 15, 2018.}

\title[Dirac Operators with Infinite Mass Boundary Conditions in Sectors]{Self-Adjointness of Dirac Operators with\\ Infinite Mass Boundary Conditions in Sectors}
\author[L. Le Treust]{Lo\"{i}c Le Treust}

\address{Aix Marseille Univ\\ CNRS\\ Centrale Marseille, I2M\\ Marseille, France}
\email{loic.le-treust@univ-amu.fr}

\author[T. Ourmi\`{e}res-Bonafos]{Thomas Ourmi\`{e}res-Bonafos}

\address{Laboratoire Math\'ematiques d'Orsay\\ Univ. Paris-Sud\\ CNRS\\ Universit\'e Paris-Saclay\\ 91405 Orsay, France}
\email{thomas.ourmieres-bonafos@math.u-psud.fr}
%
%\keywords{Dirac operator, Spectral theory, Corner domain, Graphene quantum dots, MIT bag model, infinite mass boundary conditions}                                  
%\subjclass{35J60, 35Q75, 49J45, 49S05, 81Q10, 81V05, 35P15, 58C40}

%
\begin{abstract}
	This paper deals with the study of the two-dimensional Dirac operator with {\it infinite mass} boundary conditions in sectors. We investigate the question of self-adjointness depending on the aperture of the sector: when the sector is convex it is self-adjoint on a usual Sobolev space whereas when the sector is non-convex it has a family of self-adjoint extensions parametrized by a complex number of the unit circle. As a byproduct of the analysis we are able to give self-adjointness results on polygonal domains. We also discuss the question of {\it distinguished} self-adjoint extensions and study basic spectral properties of the Dirac operator with a mass term in the sector.
\end{abstract}
\maketitle
%
 %\tableofcontents

%%%%%%%%%%%%%%%
\section{Introduction}
	\subsection{Motivations and state of the art}
	Relativistic quantum particles (electrons or quarks) confined in planar or spatial regions are efficiently described by a Hamiltonian given by the Dirac operator in a domain of the two- or three-dimensional Euclidean space with adequate boundary conditions. The question we address in this paper is related to two models of mathematical physics involving such Hamiltonians: the so-called \emph{graphene quantum dots} and the \emph{MIT bag model}. We discuss both of them below.

	\paragraph{Graphene quantum dots}	
	These two-dimensional models come into play when investigating graphene, that is a two-dimensional allotrope of carbon where the atoms are located on an infinite hexagonal lattice (see, for instance, \cite{RevModPhys.81.109}). It turns out that the massless Dirac operator is the effective Hamiltonian describing low-energy properties of electrons in such a structure but, as in practice only finite size sheets of graphene can be obtained, one has to consider the Dirac operator in a bounded domain. The shape of this domain may varies, according to whether one is interested in nano-tubes, nano-ribbons or flakes and the bounded confining systems are called graphene quantum dots. Mathematically, this translates to the study of the massless Dirac operator imposing particular boundary conditions on the boundary of a domain (the quantum dot). On the one hand, boundary conditions can be obtained by specific cuts of the carbon sheet, the most common ones being the \emph{zigzag} and \emph{armchair} boundary conditions (see \cite{PhysRevB.77.085423}). On the other hand, another confining system can be formally obtained \emph{via} a coupling of the massless Dirac operator with a mass potential that is zero inside the quantum dot and infinite elsewhere (see for instance \cite{PhysRevB.78.195427} where this boundary condition is physically justifed and \cite{2016arXiv160309657S} for a rigorous mathematical derivation). The resulting boundary conditions are called \emph{infinite mass} boundary conditions and the two-dimensional Dirac operator in a domain with these precise boundary conditions is the operator we are interested in in this paper.
	\paragraph{MIT bag model}
In the mid-70's, physicists in the MIT proposed a phenomenological model to describe the confinement of quarks inside hadrons (see \cite{MIT061974,MIT101974,MIT101975,hosaka2001quarks,johnson}) and this model, called the MIT bag model, has predicted successfully many properties of hadrons (see, for instance, \cite{PhysRevD.12.2060}). It involves the three-dimensional Dirac operator with a mass term in a bounded domain of the Euclidean space with adequate boundary conditions. These conditions can be seen as the three-dimensional counterpart of the infinite mass boundary conditions for graphene quantum dots and our interest in this three-dimensional model has drawn our attention to its two-dimensional analogue.

	From a mathematical point of view, the first challenge studying Dirac operators in bounded domains is to understand on which domain they are self-adjoint. Because the Dirac operator is an elliptic operator of order one, one expect this domain to be contained in the usual Sobolev space $H^1$. Of course, it depends on the boundary conditions and it is true for the MIT bag model for sufficiently smooth domains, as proved in \cite{arrizabalaga:hal-01540149} for $\mathcal{C}^{2,1}$-smooth domains and in \cite{2016arXiv161207058O} for $\mathcal{C}^2$-smooth domains. Moreover, when one deals with $\mathcal{C}^\infty$-smooth domains more general results can be found in \cite[Thm. 4.11]{MR3618047} and in \cite{MR2536846} where the authors use pseudo-differential techniques and Calder\'on projectors. In dimension two, the question of self-adjointness is addressed in \cite{BFSV17} for $\mathcal{C}^2$-smooth domains of the Euclidean plane. For a large class of boundary conditions, the authors prove that indeed, the domain of self-adjointness is contained in $H^1$. However, we would like to point out that it is known to be false for zigzag boundary conditions (see \cite{FS14,Sch95}) and it has important consequences for the spectral features of the problem.

	In this paper, we tackle the question of self-adjointness for the two-dimensional Dirac operator with infinite mass boundary conditions in sectors and, as a byproduct of the analysis, we deduce a self-adjointness result on polygonal domains. To our knowledge, this is the first attempt to mathematically handle this question for corner domains although \emph{polygonal graphene quantum dots} have drawn attention among the physicists community in the past few years (see for instance \cite{PhysRevB.77.193410,PhysRevB.81.033403,0957-4484-19-43-435401,PhysRevB.81.085432} for triangular, rectangular, trapezoidal and hexagonal graphene quantum dots). Actually, the question of self-adjointness is the first step toward future investigations. First, in a perspective of numerical applications, it is rather natural to consider polygonal domains because any two-dimensional domain, even smooth, is meshed with polygons. Second, we have in mind the investigation of the MIT bag operator in polyhedral domains in the regime of infinite mass. This is motivated by the work \cite{MR3663620} where it is proved that for smooth domains, the asymptotics of the eigenvalues in the usual Dirac gap are driven by a Laplace-Beltrami operator on the boundary with a curvature induced potential. As corners can be thought of as points of ``infinite'' curvature we aim to understand their influence on the spectrum of the MIT bag operator in this asymptotic regime. Because the geometry is less involved in dimension two, in a first attempt to shed some light on this question, we focus on the two-dimensional counterpart of the MIT bag model that is the Dirac operator with a mass term and infinite mass boundary conditions. This motivates the part of the present paper concerning basic spectral properties of such an operator.

	Finally, let us describe the techniques we use in this paper. They are reminiscent of \cite[Section 4.6]{Thaller1992} and \cite{esteban2017domains} where the three-dimensional Dirac-Coulomb operator is studied as well as \cite{MR1025228} which deals with the case of a radial $\delta$-shell interaction. The key point in all these works is to investigate the restrictions of the operator to stable subspaces of functions of fixed angular momentum. Then, the restricted operators only act in the radial variable and their self-adjoint extensions can be studied using classic ODE techniques \cite{MR923320}. We obtain the result for the whole operator using the standard result \cite[Theorem X.11]{reed-2}.
	\subsection{The Dirac operator with infinite mass boundary conditions in sectors}
	For $\omega\in(0,\pi)$, let $\Omega_\omega$ denote the two-dimensional sector of half-aperture $\omega$
	\begin{equation}\label{def:sector}
		\Omega_\omega = \{
			r(\cos(\theta),\sin(\theta))\in \R^2\,: r>0, \; |\theta|<\omega
		\}\,.
	\end{equation}
	Let $(D, \mathcal{D}(D))$ denote the Dirac operator with mass $m\in\mathbb{R}$ and infinite mass boundary conditions in $\Omega_\omega$. It is defined by
	\begin{equation}\label{def:DiracMIT}
		\begin{split}
			&
			\mathcal{D}(D) = 
				\{
					u\in H^1({\Omega_\omega}\, ;\ \C^2): \mathcal{B}_{\n} u = u \mbox{ on } \pa \Omega_\omega
				\},
			\\
			&
			D u = -i\sigma\cdot \nabla u+m\sigma_3 u, \mbox{ for all }u \in \mathcal{D}(D),
		\end{split}
	\end{equation}
	%
	%where $\mathcal{C}^\infty_o(\overline{\Omega_\omega}\backslash\{0\}\,;\ \C^2)$ is the set of $\mathcal{C}^\infty$ functions on ${\overline{\Omega_\omega}}$ having compact support in $\overline{\Omega_\omega}\backslash\{0\}$. 
	%
	where the Pauli matrices $\sigma = (\sigma_1,\sigma_2,\sigma_3)$ are $2\times2$ hermitian matrices defined by
	\[
		\sigma_1 = 
		\begin{pmatrix}
			0&1\\1&0
		\end{pmatrix}
		,\,
		\sigma_2 = 
		\begin{pmatrix}
			0&-i\\i&0
		\end{pmatrix}
		\mbox{ and }
		\sigma_3 = 
		\begin{pmatrix}
			1&0\\0&-1
		\end{pmatrix}.
	\]
	For $a\in\mathbb{R}^d$ (for $d=2,3$), we set
	\[
		\sigma\cdot a= \sum_{k=1}^d\sigma_k a_k.
	\]
	Remark that the Pauli matrices satisfy
	\begin{equation}\label{eq:DiracMatrixProp}
		(\sigma\cdot a)(\sigma\cdot b) = 1_2(a\cdot b) + i\sigma\cdot (a\times b),\quad \text{for } a,b \in \mathbb{R}^3.
	\end{equation}
	 For almost all $s\in \pa \Omega_\omega$, $\n(s)$ denotes the outer unit normal at the point $s$. Let $\vecteur\in\mathbb{R}^2$ be a unit vector, the matrix $\mathcal{B}_{\vecteur} $ is defined by
	\begin{equation}\label{def:BoundaryCond}
		\mathcal{B}_\vecteur = -i\sigma_3\sigma\cdot \vecteur.
	\end{equation}
	Let us remark that $\mathcal{B}_\vecteur$ satisfies
	\begin{equation}\label{eq:propB}
		{\mathcal{B}_\vecteur}^* = \mathcal{B}_\vecteur,
		\quad
		{\mathcal{B}_\vecteur}^2 = 1_2, 
		\quad
		\Sp (\mathcal{B}_\vecteur) = \{\pm 1\},
	\end{equation}
	where $1_2$ denotes the $2\times2$ identity matrix.
	\begin{remark}
		The operator $(D, \mathcal{D}(D))$ is symmetric and densely defined (see Lemma \ref{lem:sym}).
	\end{remark}
	\subsection{Main results}
	Our main result is stated in \S \ref{ref:para-1} and concerns the question of self-\break adjointness of the operator $(D,\mathcal{D}(D))$ in a sector. When there are several self-adjoint extensions, we discuss in \S \ref{ref:para-2} which one should be chosen as the ``distinguished'' one. Finally in \S \ref{ref:para-3} we state results regarding polygonal domains and in \S \ref{ref:para-4} we give basic spectral properties of $D$.
	\subsubsection{Self-adjointness in sectors}\label{ref:para-1}
	In the following theorem, we give all self-adjoint extensions of the Dirac operator with infinite mass boundary conditions in sectors.
	\begin{theorem}\label{thm:sa}
		The following holds.
		\begin{enumerate}[(i)]
			\item \label{thm:cc} \textbf{[Convex sectors]} 
			
			If $\omega\in(0,\pi/2]$, $(D, \mathcal{D}(D))$ is self-adjoint.
			\item\label{thm:ncc} \textbf{[Non-Convex sectors]}
			
			 If $\omega\in(\pi/2,\pi),$ $(D, \mathcal{D}(D))$ is symmetric and closed but not self-adjoint. The set of self-adjoint extensions of $D$ is the collection of operators 
			 \[\{(D^\gamma, \mathcal{D}(D^\gamma))| \,\gamma\in \C, |\gamma| = 1\}\]
			 defined for $v\in\mathcal{D}(D) $ by
				\[
					\begin{split}
						&\mathcal{D}(D^\gamma)
						= \mathcal{D}(D) 
						+ \Span (v_++\gamma v_-),
						\\
						&D^\gamma v = D v,
						\\
						&D^\gamma(v_++\gamma v_-)
						 = i(v_+-\gamma v_-) + m\sigma_3(v_++\gamma v_-),
					\end{split}
				\]
			and where
			\[	
			\begin{split}
				&v_+(r\cos(\theta),r\sin(\theta)) = K_{\nu_0}(r)u_0(\theta) -  i K_{\nu_0+1}(r)u_{-1}(\theta),
				\\
				&v_-(r\cos(\theta),r\sin(\theta)) = K_{\nu_0}(r)u_0(\theta) +  i K_{\nu_0+1}(r)u_{-1}(\theta),
			\end{split}
			\]
		 \[
			u_{0}(\theta) :=\frac{1}{2\sqrt{\omega}}
				\begin{pmatrix}
				e^{i\theta\nu_0}\\
				-ie^{-i\theta\nu_0}
			\end{pmatrix},
			u_{-1}(\theta) :=\frac{1}{2\sqrt{\omega}}
				\begin{pmatrix}
				e^{-i\theta(\nu_0+1)}\\
				ie^{i\theta(\nu_0+1)}
			\end{pmatrix}.
		 \]
		Here, $r>0$, $\theta\in(-\omega,\omega)$,
			$\nu_0 = \frac{\pi-2\omega}{4\omega}$ and $K_\nu$ denotes the modified Bessel function of the second kind of parameter $\nu$.
		\end{enumerate}
	\end{theorem}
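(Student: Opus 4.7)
The plan is to exploit the rotational symmetry of the sector and reduce the problem to a family of radial ODE systems on $(0,\infty)$, then apply Weyl's limit-point/limit-circle theory (as in \cite{MR923320}) and reassemble via \cite[Theorem X.11]{reed-2}. Writing $-i\sigma\cdot\nabla$ in polar coordinates as $-i(\sigma\cdot\mathbf{e}_r)\pa_r-(1/r)(\sigma\cdot\mathbf{e}_\theta)\pa_\theta$, the first task is to diagonalise the angular part: I would construct an orthonormal basis $(u_k)_{k\in\Z}$ of $L^2((-\omega,\omega);\C^2)$ made of explicit combinations of $e^{\pm i\theta(\nu_0+k)}$, where the specific value $\nu_0=(\pi-2\omega)/(4\omega)$ is forced precisely by requiring that the infinite-mass boundary condition $\mathcal{B}_{\n}u_k=u_k$ at $\theta=\pm\omega$ be automatically satisfied. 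In this way the boundary conditions are absorbed once and for all into the angular basis.

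Since $\sigma\cdot\mathbf{e}_r$ is anti-diagonal with entries $e^{\pm i\theta}$, the operator $D$ preserves each two-dimensional fibre $L^2((0,\infty),r\dx r)\otimes\Span(u_k,u_{-k-1})$, and unitary equivalence yields $D\simeq\bigoplus_{k\in\Z}D_k$ where each $D_k$ is a $2\times 2$ radial Dirac system on $(0,\infty)$ with Coulomb-type $1/r$ potentials of effective indices $\nu_0+k$ and $\nu_0+k+1$. A Frobenius analysis at $r=0$ classifies the two fundamental solutions, and $L^2(r\dx r)$-integrability of \emph{both} near zero occurs exactly when $|\nu_0+k|<1$ and $|\nu_0+k+1|<1$ simultaneously. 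Inspection shows that this happens only for $k=0$, and only when $\nu_0\in(-1/2,0)$, i.e.\ when $\omega>\pi/2$. Since $D_k$ is always in the limit-point case at $\infty$, this yields essential self-adjointness of $D_k$ in every other configuration, while in the critical one the deficiency indices are exactly $(1,1)$.

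In case $\omega\in(0,\pi/2]$, every $D_k$ is therefore essentially self-adjoint, and \cite[Theorem X.11]{reed-2} gives the self-adjointness of $D$; the minimal fibred domain is then identified with $\D(D)$ using the fact that $H^1$ functions near the vertex cannot carry the singular Frobenius solution. In case $\omega\in(\pi/2,\pi)$, I would compute the two deficiency elements of $D_0$ explicitly by solving $(D_0-m\sigma_3\mp i)v_\pm=0$: decoupling the system yields modified Bessel equations of orders $\nu_0$ and $\nu_0+1$, whose sole solutions simultaneously $L^2(r\dx r)$ at infinity \emph{and} at $0$ are the Macdonald functions $K_{\nu_0}$ and $K_{\nu_0+1}$, producing the vectors $v_\pm$ written in the statement. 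Von Neumann's theorem then parametrises the self-adjoint extensions by unitaries between the two one-dimensional deficiency spaces, that is, by $\gamma\in\C$ with $|\gamma|=1$, giving the claimed family $D^\gamma$ with the announced domain $\D(D)+\Span(v_++\gamma v_-)$.

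The main obstacle I anticipate is the ODE analysis at the singular endpoint $r=0$: one must perform the Frobenius computation and Weyl classification for the coupled $2\times 2$ radial system with $1/r$ coefficients, and in particular pinpoint the exact threshold on $\nu_0+k$ at which both independent solutions become $L^2(r\dx r)$ near zero. A secondary subtlety is reconciling, in both cases, the abstract minimal domain coming from the fibred decomposition with the concrete $H^1$-based domain $\D(D)$, which ultimately amounts to verifying that the singular $r^{\nu_0}$-type behaviour at the vertex carried by $v_\pm$ is incompatible with $H^1$-regularity, so that $v_\pm\notin\D(D)$ and $\D(D)$ coincides with the fibrewise minimal domain.
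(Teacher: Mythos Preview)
Your strategy is essentially the one the paper follows: pass to polar coordinates, build an angular orthonormal basis absorbing the infinite-mass boundary condition (the paper does this by diagonalising the self-adjoint angular operator $K=\sigma_3(-2i\pa_\theta)+1_2$ on its natural domain), reduce $D$ to a direct sum of $2\times2$ radial Dirac systems, compute deficiency indices via modified Bessel functions, and invoke von Neumann's theory together with \cite[Theorem X.11]{reed-2}. The paper also identifies the fibrewise minimal domain with $\mathcal{D}(D)\cap E_\kappa$ by proving the quadratic-form identity $\|Dv\|^2=\|\nabla v\|^2$ on $\mathcal{D}(D)$, which simultaneously settles closedness of $(D,\mathcal{D}(D))$ and the ``secondary subtlety'' you flag.

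One small inaccuracy worth correcting: the Bessel orders attached to the $k$-th fibre are $(\lambda_k-1)/2$ and $(\lambda_k+1)/2$ with $\lambda_k=\pi(1+2k)/(2\omega)$, so between consecutive fibres they jump by $\pi/(2\omega)$ rather than by $1$; your labels ``$\nu_0+k$'' and ``$\nu_0+k+1$'' are therefore only correct for $k=0$. This does not affect the conclusion, since for $k\ge1$ one still has $\lambda_k\ge 3\pi/(2\omega)>3/2$, forcing the limit-point case at $r=0$.
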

	\begin{remark} The distinction between convex an non-convex sectors in Theorem \ref{thm:sa} is not surprising: it is reminescent of \cite{MR0226187} where the so-called corner singularities for elliptic operators of even order are investigated. We also mention the books \cite{MR1173209,grisvard1980boundary} where the Laplacian in polygonal domains with various boundary conditions is studied.
	\end{remark}
	\begin{remark}\label{rem:turn}
		For $\theta_0\in[0,2\pi]$, let us consider the rotated sector
		\[
			\Omega_{\omega,\theta_0} := \{r\big(\cos(\theta),\sin(\theta)\big)\in\mathbb{R}^2 :  r>0, |\theta-\theta_0|<\omega\}.
		\]
		Remark that $e^{-i\sigma_2\theta_0}$ is a rotation matrix of angle $\theta_0$ and we have $\Omega_{\omega,\theta_0} = e^{-i\sigma_2\theta_0} \Omega_\omega$.		

		Let $\mathcal{U}_{\theta_0}$ be the unitary transformation defined by
		\[
			\begin{array}{lll}
				\mathcal{U}_{\theta_0}:\,
				& L^2(\Omega_{\omega,\theta_0},\, \C^2)&\longrightarrow L^2(\Omega_{\omega},\, \C^2)\\
				&  v&\longmapsto e^{i(\theta_0/2)\sigma_3} v(e^{-i\sigma_2\theta_0} \cdot ).
			\end{array}
		\]
		It satisfies
		\[
			\begin{split}
				&
				\mathcal{U}_{\theta_0}^{-1}(-i\sigma\cdot \nabla +m\sigma_3)\mathcal{U}_{\theta_0}	= -i\sigma\cdot \nabla +m\sigma_3,
				\\
				& \mathcal{U}_{\theta_0}^{-1}(\sigma\cdot \n)\mathcal{U}_{\theta_0} = \sigma\cdot \left(e^{-i\sigma_2\theta_0}\n\right),
			\end{split}
		\]
		for all unit vector $\n\in \R^2$ (see  \cite[Sections 2 and 3]{Thaller1992}). This ensures that Theorem \ref{thm:sa} essentially covers every sectors.
	\end{remark}
	\begin{remark}\label{rem:besselasym}
		%
		%ASYMPTOTIC EXPENSION OF BESSEL FUNCTIONS.
		Let $\nu\in \R$. For further use, we recall some properties of the modified Bessel functions $K_\nu$ of the second kind (see \cite[Chapter 7 Section 8 and Chapter 12 Section 1]{olver2014asymptotics} or \cite{abramowitz1964handbook}).
		\begin{enumerate}[(i)]
			\item The functions $r\in (0,+\infty)\mapsto K_\nu(r)\in\R$ are positive and decreasing.%  \cite[Chapter 7 Theorem 8.1]{olver2014asymptotics}.
			\item For $r>0$, we have 
				\[
					K_{\nu}(r) = K_{-\nu}(r).
				\]
			%  \cite[Chapter 7 Identity 8.10]{olver2014asymptotics}.
			\item For $r>0$, we have 
		\begin{equation}\label{eq:besselasym}
			K_\nu(r) \sim_{r\to 0}
			\begin{cases}
				\frac{\Gamma(\nu)}{2}\left(\frac{r}{2}\right)^{-\nu}
				&\mbox{ if $\nu>0$}
				\\
				-\log(r)
				&\mbox{ if $\nu=0$}
			\end{cases}
		\end{equation}
		and
		\[%begin{equation}\label{eq:besselasym}
			K_\nu(r) \sim_{r\to +\infty}
				\left(\frac{\pi}{2r}\right)^{1/2}e^{-r}.
		\]%end{equation}
		%  \cite[Chapter 12 Section 1]{olver2014asymptotics}.
		\end{enumerate}
		In particular, the domain of $D^\gamma$ (see Theorem \ref{thm:sa} \eqref{thm:ncc}) rewrites using\break 
		$
			r^{-|\nu_0|}\chi(r)
		$
		and
		$
			r^{-(1-|\nu_0|)}\chi(r)
		$,	
		instead of $K_{\nu_0}$ and $K_{\nu_0+1}$, respectively. Here $\chi:\R_+\mapsto [0,1]$ is a smooth function which equals $1$ in a neighborhood of $0$ and $0$ for $r$ large enough.
	\end{remark}
	\subsubsection{Physical remarks on the self-adjoint extensions in sectors}\label{ref:para-2}
	For non-convex sectors, a natural question is to know whether some self-adjoint extensions given in Theorem \ref{thm:sa}\eqref{thm:ncc} are more relevant than others from the physical point of view. The following propositions try to shed some light on this question.
	\vfill\eject
	
	\par
	\emph{Charge conjugation symmetry.}
	
	The Dirac operator anticommutes with the charge conjugation operator $C$. It is defined for $u\in\C^2$ by
	\begin{equation}\label{eqn:def_chargeconj}
		C u = \sigma_1 \overline{u}.
	\end{equation}
	In particular, for all $\omega\in(0,\pi)$, the operator $C$ is an antiunitary transformation that leaves $\mathcal{D}(D)$ invariant, it satisfies $C^2 = 1_2$ and
	\[
		D C = -C D.
	\]
	This property is strongly related to the particle/antiparticle interpretation of the spectrum of the Dirac operator (see \cite[Section 1.4.6]{Thaller1992}).
	The following proposition gives the extensions of $D$ that still satisfy these properties with respect to the charge conjugation operator $C$.
	\begin{proposition}\label{prop:charge_conj} 	%
	Let $\omega\in(\pi/2,\pi)$. The only self-adjoint extensions of\break $(D,\mathcal{D}(D))$ such that
	\[
		%\begin{split}
			C\mathcal{D}(D^\gamma)= \mathcal{D}(D^\gamma)
		%\end{split}
	\]
	are the extensions $\big(D^\gamma,\mathcal{D}(D^\gamma)\big)$ for $\gamma=\pm 1$. In these cases, we have the anticommutation relation
	\[
		\{C,D^\gamma\} = CD^\gamma + D^\gamma C = 0.
	\]
	\end{proposition}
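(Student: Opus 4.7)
The plan is to exploit the explicit form of $v_+ + \gamma v_-$ together with the simple action of $C$ on the angular spinors $u_0, u_{-1}$. First I would compute directly from the definitions
\[
	Cu_0 = iu_0, \qquad Cu_{-1} = -iu_{-1}.
\]
Both are one-line matrix identities: the antilinear map $u \mapsto \sigma_1 \overline{u}$ swaps the components and conjugates, and applied to $u_0 = \tfrac{1}{2\sqrt{\omega}}(e^{i\theta\nu_0}, -ie^{-i\theta\nu_0})^{\top}$ it produces exactly $iu_0$, and similarly for $u_{-1}$. Combining this with the antilinearity of $C$ and the fact that $K_{\nu_0}$ and $K_{\nu_0+1}$ take real values, one immediately obtains $Cv_+ = iv_+$ and $Cv_- = iv_-$, whence
\[
	C(v_+ + \gamma v_-) = i\bigl(v_+ + \overline{\gamma}\, v_-\bigr).
\]

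Next I would characterize the $\gamma$ for which the extension is invariant. Since $C\mathcal{D}(D) = \mathcal{D}(D)$ (recorded just before the proposition), the condition $C\mathcal{D}(D^\gamma) = \mathcal{D}(D^\gamma)$ reduces to the membership $i(v_+ + \overline{\gamma} v_-) \in \mathcal{D}(D) + \Span_{\C}(v_+ + \gamma v_-)$. By Theorem \ref{thm:sa}\eqref{thm:ncc}, the family $\{v_+ + \gamma v_-\}_{|\gamma|=1}$ parametrizes a circle of \emph{distinct} one-dimensional extensions of $D$, so $\{v_+, v_-\}$ must be linearly independent modulo $\mathcal{D}(D)$ (otherwise those extensions would all coincide). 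Identifying the coefficients of $v_+$ and $v_-$ in the quotient $\mathcal{D}(D^*)/\mathcal{D}(D)$ therefore forces $\overline{\gamma} = \gamma$, which together with $|\gamma|=1$ gives exactly $\gamma = \pm 1$.

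For $\gamma = \pm 1$ it remains to verify $\{C, D^\gamma\} = 0$ directly. On $\mathcal{D}(D)$ this is the given identity $\{C, D\} = 0$. On the extra generator I would use antilinearity of $C$, the identity $C \sigma_3 = -\sigma_3 C$ (immediate from $\sigma_1 \sigma_3 = -\sigma_3 \sigma_1$ and the reality of $\sigma_3$), the relations $Cv_\pm = iv_\pm$, and the explicit value of $D^\gamma(v_+ + \gamma v_-)$ recorded in Theorem \ref{thm:sa}. A short computation shows that $CD^\gamma(v_+ + \gamma v_-)$ and $-D^\gamma C(v_+ + \gamma v_-)$ both equal $(v_+ - \gamma v_-) - i m \sigma_3 (v_+ + \gamma v_-)$.

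The main obstacle is really just bookkeeping the antilinear operations correctly; once the clean action $Cu_0 = iu_0$, $Cu_{-1} = -iu_{-1}$ is in hand, every subsequent step is a short algebraic manipulation, and the linear independence of $\{v_+,v_-\}$ modulo $\mathcal{D}(D)$ is guaranteed by the deficiency-index structure already captured in Theorem \ref{thm:sa}.
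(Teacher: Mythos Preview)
Your proof is correct and follows essentially the same route as the paper's. The paper records $Cv_+ = iv_+$, $Cv_- = iv_-$ directly and draws the same conclusion $\overline{\gamma}=\gamma$; you simply factor this through the intermediate identities $Cu_0 = iu_0$, $Cu_{-1}=-iu_{-1}$, make the linear-independence-mod-$\mathcal{D}(D)$ step explicit, and keep the mass term throughout rather than reducing to $m=0$ as the paper does.
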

	\par
	\emph{Scale invariance.}

	Since $\Omega_\omega$ is invariant by dilations, we immediately get that $\mathcal{D}(D)$ is stable under the action of the group of dilations. For non-convex sectors, we obtain the following proposition.
	\begin{proposition}\label{prop:scaling}
		Let $\omega\in(\pi/2,\pi)$. The only self-adjoint extensions of\break $(D,\mathcal{D}(D))$ such that for all $u\in \mathcal{D}(D^\gamma)$ and all $\alpha>0$ we have
	\[
		[\x\in\Omega_\omega\mapsto u(\alpha\x)\in \C^2]\in \mathcal{D}(D^\gamma)
	\]
	are the extensions $\big(D^\gamma,\mathcal{D}(D^\gamma)\big)$ for $\gamma=\pm 1$. 	\end{proposition}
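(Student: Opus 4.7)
The plan is to reduce the invariance condition to an algebraic constraint on the scaling exponents of the two singular components appearing at the origin in the defect element $v_+ + \gamma v_-$. First, I would observe that the core domain $\mathcal{D}(D)$ is itself dilation invariant: for $u \in H^1(\Omega_\omega;\C^2)$ the rescaled function $\x \mapsto u(\alpha \x)$ lies in $H^1(\Omega_\omega;\C^2)$ since $\Omega_\omega$ is dilation invariant, and the boundary condition $\mathcal{B}_\n u = u$ is preserved because the outer unit normal on $\partial \Omega_\omega$ is constant along the two rays and unchanged by $\x \mapsto \alpha \x$. Hence the question reduces to how dilation acts on the one-dimensional complement $\Span(v_+ + \gamma v_-)$, modulo $\mathcal{D}(D)$.

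Next I would invoke Remark \ref{rem:besselasym} to replace $K_{\nu_0}(r)$ and $K_{\nu_0+1}(r)$ in the definition of $v_\pm$ by the pure power singularities $r^{-|\nu_0|}\chi(r)$ and $r^{-(1-|\nu_0|)}\chi(r)$; the differences between the Bessel functions and these cutoffs belong to $H^1$ near the origin, and hence to $\mathcal{D}(D)$. Writing $w_\gamma := v_+ + \gamma v_-$, a direct computation gives
\[
w_\gamma(r\cos\theta, r\sin\theta) = (1+\gamma)\,r^{-|\nu_0|}\chi(r)\,u_0(\theta) \;-\; i(1-\gamma)\,r^{-(1-|\nu_0|)}\chi(r)\,u_{-1}(\theta) \;+\; R(r,\theta),
\]
with $R \in \mathcal{D}(D)$. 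Under the dilation $\x \mapsto \alpha \x$, the two singular terms acquire the multiplicative factors $\alpha^{-|\nu_0|}$ and $\alpha^{-(1-|\nu_0|)}$ respectively, while the angular profiles $u_0, u_{-1}$ are untouched and any $H^1$ remainder is absorbed into $\mathcal{D}(D)$.

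For $\mathcal{D}(D^\gamma)$ to be dilation invariant, the dilated defect $w_\gamma(\alpha \cdot)$ must be congruent modulo $\mathcal{D}(D)$ to a scalar multiple $\lambda(\alpha) w_\gamma$. Matching coefficients in front of the two independent singular profiles $r^{-|\nu_0|} u_0$ and $r^{-(1-|\nu_0|)} u_{-1}$ yields the system
\[
(1+\gamma)\,\alpha^{-|\nu_0|} = \lambda(\alpha)(1+\gamma), \qquad (1-\gamma)\,\alpha^{-(1-|\nu_0|)} = \lambda(\alpha)(1-\gamma).
\]
Since $\omega \in (\pi/2,\pi)$ forces $|\nu_0| \in (0,1/4)$ and therefore $|\nu_0| \neq 1 - |\nu_0|$, demanding these equations to hold for every $\alpha > 0$ with $(1+\gamma)(1-\gamma) \neq 0$ is impossible. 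Hence dilation invariance forces $\gamma \in \{-1, +1\}$. This is the crux of the argument and the only delicate point is ensuring that the linear independence of the two singular profiles modulo $\mathcal{D}(D)$ is genuine, which follows from the fact that $r^{-|\nu_0|}$ and $r^{-(1-|\nu_0|)}$ are not locally in $H^1$ near $0$ with the stated exponents and cannot cancel each other.

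Finally, I would verify the converse: for $\gamma = +1$, one has $w_{+1} = (v_+ + v_-)$ whose singular part is a pure multiple of $r^{-|\nu_0|}\chi(r)u_0(\theta)$, which rescales to $\alpha^{-|\nu_0|}$ times itself modulo $\mathcal{D}(D)$; similarly $w_{-1}$ is a pure multiple of $r^{-(1-|\nu_0|)}\chi(r) u_{-1}(\theta)$ and rescales to $\alpha^{-(1-|\nu_0|)}$ times itself. In both cases $\mathcal{D}(D^{\pm 1})$ is stable under the dilation group, completing the proof. The only real obstacle is to make the ``modulo $\mathcal{D}(D)$'' identifications fully rigorous, which is handled cleanly using Remark \ref{rem:besselasym} and the fact that any $H^1$ function satisfying $\mathcal{B}_\n u = u$ lies in $\mathcal{D}(D)$.
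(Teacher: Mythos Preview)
Your argument is correct and follows essentially the same route as the paper's proof (Lemma~\ref{lem:scal}): both replace the Bessel functions by the pure power singularities via Remark~\ref{rem:besselasym}, observe that the two components of the defect scale with distinct exponents, and conclude that the ratio $(1-\gamma)/(1+\gamma)$ cannot be preserved by all dilations unless it vanishes or is infinite. The paper carries out the computation at the level of the radial operator $\mathsf{d}_\omega^{0,\gamma}$ and goes one step further by identifying explicitly the extension $D^{\tilde\gamma}$ to which $\mathcal{V}_\alpha$ maps $\mathcal{D}(D^\gamma)$, with $\tilde\gamma = e^{2i\arctan(\alpha^{-\lambda_0}\tan(s/2))}$, whereas you argue directly in $\Omega_\omega$ and establish only the impossibility of $\tilde\gamma=\gamma$ for $\gamma\neq\pm1$; both are sufficient for the proposition.
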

	This proposition is essential in the proofs using Virial identities (see Remark \ref{rem:virial}). 
	\par
	\emph{Kinetic energy.}

	From a physical point of view, it is reasonable to impose the domain of the Dirac operator with infinite mass boundary conditions to be contained in the \emph{formal} form domain $H^{1/2}(\Omega_\omega)$. It turns out that only a single self-adjoint extension of $D$ satisfies this condition.
	\begin{proposition}\label{prop:dist_ext} Let $\omega\in(\pi/2,\pi)$. The only self-adjoint extension of\break $(D,\mathcal{D}(D))$ satisfying $\mathcal{D}(D^\gamma)\subset H^{1/2}(\Omega_\omega)$ is $(D^{1},\mathcal{D}(D^{1}))$.
	\end{proposition}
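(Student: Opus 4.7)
The plan is to reduce the question to an $H^{1/2}$-regularity test for the single extra singular function appearing in Theorem \ref{thm:sa} \eqref{thm:ncc}. Since $\mathcal{D}(D) \subset H^{1}(\Omega_\omega) \subset H^{1/2}(\Omega_\omega)$, the inclusion $\mathcal{D}(D^\gamma) \subset H^{1/2}(\Omega_\omega)$ is equivalent to $w_\gamma := v_+ + \gamma v_- \in H^{1/2}(\Omega_\omega)$. Expanding,
\[
w_\gamma(r\cos\theta, r\sin\theta) = (1+\gamma)\, K_{\nu_0}(r)\, u_0(\theta) + i(\gamma - 1)\, K_{\nu_0+1}(r)\, u_{-1}(\theta).
\]
A short computation using the explicit exponentials defining $u_0$ and $u_{-1}$ shows that $\{u_0, u_{-1}\}$ is orthonormal in $L^2((-\omega, \omega); \C^2)$; in particular
\[
\int_{-\omega}^{\omega} |w_\gamma(r, \theta)|^2 \, d\theta = |1+\gamma|^2\, K_{\nu_0}(r)^2 + |\gamma-1|^2\, K_{\nu_0+1}(r)^2.
\]

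In the case $\gamma = 1$ we have $w_1 = 2 K_{\nu_0}(r) u_0(\theta)$. By Remark \ref{rem:besselasym}, $K_{\nu_0}(r) \sim c\, r^{-|\nu_0|}$ as $r \to 0^+$ with $|\nu_0| < 1/4$, and $K_{\nu_0}$ decays exponentially at infinity. Since $-|\nu_0| > -1/2$, the standard Mellin-transform characterization of fractional Sobolev spaces on a sector (or a direct Gagliardo-seminorm computation using a smooth cutoff) yields $w_1 \in H^{1/2}(\Omega_\omega)$.

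In the case $\gamma \neq 1$, I use the Sobolev embedding $H^{1/2}(\Omega_\omega) \hookrightarrow L^4(\Omega_\omega)$, valid because $\Omega_\omega$ is a Lipschitz domain (extend to $\R^2$ and apply the usual two-dimensional Sobolev embedding). By Jensen's inequality on the interval of length $2\omega$,
\[
\int_{-\omega}^{\omega} |w_\gamma(r, \theta)|^4 \, d\theta \geq \frac{1}{2\omega} \left(\int_{-\omega}^{\omega} |w_\gamma(r, \theta)|^2 \, d\theta\right)^2 \geq \frac{|\gamma-1|^4}{2\omega}\, K_{\nu_0+1}(r)^4.
\]
Integrating against $r\, dr$ and using Remark \ref{rem:besselasym}, $K_{\nu_0+1}(r)^4 \sim c'\, r^{-4(1-|\nu_0|)}$ near $0$, so $r\, K_{\nu_0+1}(r)^4 \sim c'\, r^{4|\nu_0|-3}$ is not integrable at $0$ since $|\nu_0| < 1/4$ forces $4|\nu_0| - 3 < -2$. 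Hence $w_\gamma \notin L^4(\Omega_\omega)$, and a fortiori $w_\gamma \notin H^{1/2}(\Omega_\omega)$.

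The main obstacle is the sharp statement in the $\gamma = 1$ case that $K_{\nu_0} u_0 \in H^{1/2}(\Omega_\omega)$: it is a precise fractional regularity estimate for a weakly homogeneous function at the vertex of the sector, and the cleanest route is through a Mellin transform in the radial variable (equivalently, the Kondratiev/Grisvard description of Sobolev spaces in corner domains). Combining the two cases shows that the only $\gamma$ yielding $\mathcal{D}(D^\gamma) \subset H^{1/2}(\Omega_\omega)$ is $\gamma = 1$.
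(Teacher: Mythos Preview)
Your reduction and the $\gamma \neq 1$ argument match the paper's approach: both use the Sobolev embedding $H^{1/2}(\Omega_\omega)\hookrightarrow L^4(\Omega_\omega)$ together with the asymptotics of $K_{\nu_0+1}$ near the origin. (A minor remark: $|u_{-1}(\theta)|^2 = 1/(2\omega)$ is constant in $\theta$, so the Jensen step is unnecessary --- one can compute $|w_\gamma|^4$ pointwise as the paper does.)

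The gap is in the $\gamma=1$ case. You assert $K_{\nu_0}(r)u_0(\theta)\in H^{1/2}(\Omega_\omega)$ by invoking a Mellin-transform/Kondratiev description of corner Sobolev spaces, but you supply no details and even flag this as ``the main obstacle''. As written, the key inclusion is delegated to an unstated external result. The paper avoids this machinery with an elementary argument: since
\[
|\nabla(K_{\nu_0}u_0)|^2(r,\theta) = \frac{1}{4\omega}\Big(|K_{\nu_0}'(r)|^2 + \frac{2\nu_0^2}{r^2}|K_{\nu_0}(r)|^2\Big),
\]
the asymptotics of $K_{\nu_0}$ and $K_{\nu_0}'$ (via the recurrence $K_{|\nu|}' + \tfrac{|\nu|}{r}K_{|\nu|} = -K_{|\nu|-1}$ and Remark~\ref{rem:besselasym}) yield $K_{\nu_0}u_0 \in W^{1,p}(\Omega_\omega)$ for every $1\leq p < 2/(|\nu_0|+1)$. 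Because $|\nu_0|<1/4$ for $\omega\in(\pi/2,\pi)$, this upper bound is at least $8/5>4/3$, and the two-dimensional Sobolev embedding $W^{1,4/3}(\Omega_\omega)\hookrightarrow H^{1/2}(\Omega_\omega)$ (the scaling condition $1-2/p \geq 1/2 - 1$ holds with equality at $p=4/3$) finishes the proof. This route is short and self-contained; you should either carry out your Mellin/Gagliardo computation explicitly or replace it with this $W^{1,p}$ argument.
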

	\begin{remark}\label{rem:regudom}
		The proof of Proposition \ref{prop:dist_ext} shows a stronger statement. Indeed, if $\gamma=1$, we have $\mathcal{D}(D^\gamma)\subset H^{3/4-\varepsilon}(\Omega_\omega)$ for all $\varepsilon\in(0,1/4)$. %In particular, by the trace theorem ensures that the MIT bag boundary condition makes sense in $H^{1/4-\varepsilon}(\pa\Omega_\omega)$ .
	\end{remark}
	\subsubsection{About polygonal domains}\label{ref:para-3}
		Using Theorem \ref{thm:sa}\eqref{thm:cc}, Remark \ref{rem:turn} and partitions of unity, we obtain the following result.
	\begin{corollary}
		Let $\Omega\subset \R^2$ be a convex polygonal domain. The Dirac operator $(D_\Omega,\mathcal{D}(D_\Omega))$ defined by
		\[
			\begin{split}
				&\mathcal{D}(D_\Omega) = \{u\in H^1(\Omega,\C^2),\ \mathcal{B}_\n u = u \mbox{ on }\pa \Omega\},
				\\
				&Du = -i\sigma\cdot\nabla u + m\sigma_3u\mbox{ for all } u\in \mathcal{D}(D_\Omega),
			\end{split}
		\]
		is self-adjoint.
	\end{corollary}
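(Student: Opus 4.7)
The plan is to deduce self-adjointness on a convex polygon from the sector result by a localization argument. The operator $(D_\Omega,\mathcal{D}(D_\Omega))$ is symmetric by the same integration by parts computation as in Lemma \ref{lem:sym}, so it suffices to prove $\mathcal{D}(D_\Omega^*) \subset \mathcal{D}(D_\Omega)$.

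Denote the vertices of $\Omega$ by $v_1,\ldots,v_N$. By convexity, for each $j\in\{1,\ldots,N\}$ one can choose an open ball $B_j$ centered at $v_j$ so small that $\Omega\cap B_j=\Omega_{\omega_j,\theta_j}\cap B_j$ for some half-aperture $\omega_j\in(0,\pi/2]$ and rotation angle $\theta_j\in[0,2\pi)$. Cover the compact set $\overline{\Omega}\setminus\bigcup_j B_j$ by finitely many balls $B_0^{(k)}$ so small that each of them either lies inside $\Omega$ or meets $\pa\Omega$ along a single straight edge (containing no vertex). Let $\{\chi_j\}$ be a smooth partition of unity on $\overline{\Omega}$ subordinate to $\{B_j,B_0^{(k)}\}$.

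Let $u\in\mathcal{D}(D_\Omega^*)$. Each cut-off $\chi_j u$ lies in $L^2$, and since $D_\Omega$ is a first-order differential operator with constant coefficients, the commutator $[D_\Omega,\chi_j]=-i\sigma\cdot\nabla\chi_j$ acts by bounded multiplication. Therefore $D_\Omega^*(\chi_j u)=\chi_j D_\Omega^*u+[D_\Omega,\chi_j]u\in L^2$. Because $\chi_j$ is compactly supported inside $B_j$, extension by zero turns $\chi_j u$ into an element of the maximal domain of the Dirac operator on, respectively, the full rotated sector $\Omega_{\omega_j,\theta_j}$, a half-plane (which is the sector of half-aperture $\pi/2$), or the whole plane $\R^2$. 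In all three cases self-adjointness holds: for the sectors by Theorem \ref{thm:sa}\eqref{thm:cc} combined with Remark \ref{rem:turn}, and for the whole plane by the classical Fourier argument. Consequently $\chi_j u\in H^1$ and $\mathcal{B}_\n(\chi_j u)=\chi_j u$ on $\pa\Omega\cap\overline{B_j}$.

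Summing over $j$ yields $u=\sum_j\chi_j u\in H^1(\Omega;\C^2)$ with $\mathcal{B}_\n u=u$ on $\pa\Omega$, i.e.\ $u\in\mathcal{D}(D_\Omega)$. The main subtlety is in the localization step: one has to check that extending $\chi_j u$ by zero indeed produces an element of the maximal domain of the sector operator, with the boundary condition inherited correctly. This is where the compact support of $\chi_j$ strictly inside $B_j$ is crucial, as it prevents any spurious trace from appearing on the artificial boundary $\pa B_j\cap\overline{\Omega}$.
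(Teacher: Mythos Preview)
Your argument is correct and is precisely the localization-by-partition-of-unity strategy that the paper itself indicates (the paper gives no detailed proof, only the one-line hint ``Using Theorem \ref{thm:sa}\eqref{thm:cc}, Remark \ref{rem:turn} and partitions of unity''). Your write-up is a faithful and careful realization of that sketch, including the correct identification of the model operators (rotated convex sector, half-plane as the sector with $\omega=\pi/2$, and the free plane) and the observation that the compact support of $\chi_j$ inside $B_j$ is what allows the extension-by-zero step to place $\chi_j u$ in the adjoint domain of the model operator.
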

	\begin{remark}
		A similar statement can be formulated for non-convex polygonal domains using Theorem \ref{thm:sa}\eqref{thm:ncc}. We chose not to write it down here for the sake of readability.
	\end{remark}
	\subsubsection{Spectral properties in sectors}\label{ref:para-4}
	Now, we investigate spectral properties of the self-adjoint operators in sectors. We restrict ourselves to the \emph{physical case} $\gamma = 1$ and, for the sake of readability, we introduce the following unified notation:
	\[
		D^{sa} = \begin{cases}
			D&\mbox{ if }\omega\in(0,\pi/2],\\
			D^1&\mbox{ if }\omega \in(\pi/2,\pi).
		\end{cases}
	\]
	where $D$  and $D^1$ are defined in \eqref{def:DiracMIT} and Theorem \ref{thm:sa}\eqref{thm:ncc}, respectively. As defined $D^{sa}$ is self-adjoint. The following two propositions describe basic spectral properties of $D^{sa}$. The first one is about the structure of its essential spectrum and the second one deals with its point spectrum.
	%thm:sa
	%
	\begin{proposition}\label{prop:spec}
%		The following holds true.
%		\begin{enumerate}[(i)]
%			\item \label{prop:sp} 
		%
			Let $\omega\in(0,\pi)$. We have
			\[%\begin{equation}\label{eq:spec}
				\Sp (D^{sa}) = \Sp_{ess} (D^{sa})=
					\begin{cases}
						\R &\mbox{ if }m\leq 0,
						\\
						\R\setminus (-m,m)&\mbox{ if }m\geq0.
					\end{cases}
			\]
	\end{proposition}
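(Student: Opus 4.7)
The plan is to prove the two inclusions $\Sp(D^{sa})\subseteq S$ and $S\subseteq\Sp_{ess}(D^{sa})$, where $S$ denotes the set described in the statement. Combined with the general inclusion $\Sp_{ess}(D^{sa})\subseteq\Sp(D^{sa})$, these will automatically force both equalities $\Sp(D^{sa})=\Sp_{ess}(D^{sa})=S$ and, in particular, rule out any isolated eigenvalue.

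For the lower bound $S\subseteq\Sp_{ess}(D^{sa})$ I would construct two families of singular Weyl sequences. For a \emph{bulk} sequence, fix $E\in\R$ with $E^2\geq m^2$, choose $k\in\R^2$ satisfying $|k|^2=E^2-m^2$ together with a spinor $\phi_k\in\C^2$ solving $(\sigma\cdot k+m\sigma_3)\phi_k=E\phi_k$, and take $u_n(\x)=\chi_n(\x)\,e^{ik\cdot\x}\phi_k$, where $\chi_n$ is a smooth cut-off supported in a ball $B(\x_n,r_n)\subset\Omega_\omega$ with $\mathrm{dist}(\x_n,\pa\Omega_\omega)\geq 2r_n$ and $r_n\to+\infty$. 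Since $\supp u_n\subset\Omega_\omega$ we have $u_n\in\mathcal{D}(D)$, and since $e^{ik\cdot\x}\phi_k$ is a plane-wave eigenfunction of $D$ on $\R^2$ only the commutator $[D,\chi_n]=-i\sigma\cdot\nabla\chi_n$ contributes to $(D-E)u_n$; choosing $\chi_n=\chi(\cdot/r_n)$ yields $\|u_n\|\sim r_n$ and $\|(D-E)u_n\|=O(1)$, so $u_n/\|u_n\|$ is a singular Weyl sequence at energy $E$. This gives $\R\setminus(-|m|,|m|)\subseteq\Sp_{ess}(D^{sa})$. For an \emph{edge} sequence in the case $m<0$, I would work in coordinates $(s,t)$ adapted to one of the rays of $\pa\Omega_\omega$, with $t>0$ pointing into $\Omega_\omega$, and note that the exponentially decaying state $e^{iks}e^{mt}\phi_0$, with $\phi_0$ in the $+1$-eigenspace of $\mathcal{B}_\n$ along that edge, is a genuine eigenfunction of the free Dirac operator on the half-plane $\{t>0\}$ with infinite-mass condition at $\{t=0\}$, at energy $E=k$. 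Cutting off along the edge at distance $s_n\to+\infty$ from the vertex with longitudinal extent $R_n\to+\infty$ produces a Weyl sequence in $\mathcal{D}(D)$ at any prescribed $E\in(m,-m)$; the case $m=0$ then follows by closedness of $\Sp_{ess}(D^{sa})$.

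For the upper bound $\Sp(D^{sa})\subseteq S$ the only non-trivial content is the existence of the gap when $m\geq 0$. The natural tool is the Green-type identity
\[
\|Du\|_{L^2(\Omega_\omega)}^2=\|\nabla u\|_{L^2(\Omega_\omega)}^2+m^2\|u\|_{L^2(\Omega_\omega)}^2+m\int_{\pa\Omega_\omega}|u|^2\dx s,
\]
which I would establish on $\mathcal{D}(D)$ by combining $D^2=-\Delta+m^2$ (immediate from \eqref{eq:DiracMatrixProp} since $\sigma_k\sigma_3+\sigma_3\sigma_k=0$ for $k=1,2$) with the pointwise boundary identity $\sigma\cdot\n\,u=i\sigma_3 u$ forced by $\mathcal{B}_\n u=u$: the two boundary contributions arising from integrating $\|Du\|^2$ by parts collapse into the single mass-surface term. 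When $m\geq 0$ the surface integral is non-negative, so $\|Du\|^2\geq m^2\|u\|^2$, and the spectral theorem excludes $(-m,m)$ from $\Sp(D^{sa})$.

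The main obstacle will be to extend this coercivity to the full domain $\mathcal{D}(D^1)$ in the non-convex regime $\omega\in(\pi/2,\pi)$. The singular generator $v_++v_-$ behaves like $r^{-|\nu_0|}$ near the vertex (see Remark~\ref{rem:regudom}) and therefore lies only in $H^{3/4-\eps}$, not in $H^1$, so the right-hand side of the Green identity is not \emph{a priori} defined on it. I would handle this either by a direct computation, exploiting the explicit Bessel representation $v_\pm=K_{\nu_0}(r)u_0(\theta)\mp iK_{\nu_0+1}(r)u_{-1}(\theta)$ together with the asymptotics of Remark~\ref{rem:besselasym} to verify $\|D^1(v_++v_-)\|^2\geq m^2\|v_++v_-\|^2$ and to control the cross terms with the regular part, or, more robustly, by identifying $D^{sa}$ with the operator associated with a closed quadratic form on a form domain sitting inside $H^{1/2}(\Omega_\omega)$ (which contains $\mathcal{D}(D^1)$ by Proposition~\ref{prop:dist_ext}) and deriving the coercivity directly at the form level.
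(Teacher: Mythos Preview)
Your strategy---bulk Weyl sequences for $|E|\geq |m|$, edge Weyl sequences along one ray for $m<0$, and a Green-type identity for the gap when $m\geq 0$---is exactly the three-step scheme the paper follows, so the proposal is correct in outline. The one substantive difference is in how the non-convex case is handled in Step~3. Your identity is written with $\|\nabla u\|^2$ on the right, and as you yourself note this quantity is infinite on $\mathcal{D}(D^1)\setminus\mathcal{D}(D)$ since $v_++v_-\notin H^1$; your two proposed repairs (explicit Bessel bookkeeping, or an abstract form argument on $H^{1/2}$) would both be awkward---the second one in particular is unclear because $D^{sa}$ is not semibounded, so there is no obvious closed form to attach it to. The paper sidesteps the issue entirely by writing the identity as
\[
\|D^{sa}u\|^2=\|\sigma\cdot\nabla u\|^2_{L^2(\Omega_\omega)}+m^2\|u\|^2_{L^2(\Omega_\omega)}+m\|u\|^2_{L^2(\partial\Omega_\omega)},
\]
keeping $\|\sigma\cdot\nabla u\|^2$ (automatically finite on the operator domain) rather than $\|\nabla u\|^2$. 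The only thing one then needs is that the cross term $2\Re\langle -i\sigma\cdot\nabla u,\sigma_3 u\rangle$ can be integrated by parts to a boundary integral, and for this it suffices that $u$ has an $L^2$ trace on $\partial\Omega_\omega$; the inclusion $\mathcal{D}(D^1)\subset H^{3/4-\eps}(\Omega_\omega)$ from Remark~\ref{rem:regudom} provides exactly that. This gives the coercivity $\|D^{sa}u\|^2\geq m^2\|u\|^2$ on all of $\mathcal{D}(D^{sa})$ in one line, without any explicit computation on the singular part.
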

	\begin{proposition}\label{prop:pspec}
	Let $\omega\in (0,\pi)$. $D^{sa}$ has no point spectrum in $\R\setminus (-|m|,|m|)$.
	\end{proposition}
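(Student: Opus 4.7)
The plan is to exploit the dilation invariance of $\mathcal{D}(D^{sa})$ granted by Proposition \ref{prop:scaling} to derive a Virial-type identity that constrains any eigenvalue to lie in $[-|m|,|m|]$, and then to treat the remaining boundary case $|\lambda|=|m|$ by a short complex-analytic argument on the (anti-)holomorphic components.

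Fix $\psi\in\mathcal{D}(D^{sa})$ with $D^{sa}\psi=\lambda\psi$, $\lambda\in\R$, and for $\alpha>0$ set $\psi_\alpha(\x):=\psi(\alpha\x)$. By Proposition \ref{prop:scaling} (trivially also when $\omega\leq\pi/2$) we have $\psi_\alpha\in\mathcal{D}(D^{sa})$, and since $\nabla\psi_\alpha(\x)=\alpha(\nabla\psi)(\alpha\x)$ a direct computation gives
\[
	D^{sa}\psi_\alpha=\alpha\lambda\,\psi_\alpha+(1-\alpha)\,m\,\sigma_3\psi_\alpha.
\]
Pairing with $\psi$, using self-adjointness of $D^{sa}$ to move the operator back onto $\psi$ on the left, dividing by $1-\alpha$ for $\alpha\neq 1$, and letting $\alpha\to 1$ (by strong continuity of dilations on $L^2$), one obtains the Virial identity
\[
	\lambda\,\|\psi\|^2=m\,\langle\psi,\sigma_3\psi\rangle.
\]
Because $\sigma_3$ is unitary with $\sigma_3^2=1_2$, Cauchy--Schwarz forces $|\lambda|\leq|m|$.

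For $\lambda\in\R\setminus(-|m|,|m|)$ and $m\neq 0$ the only remaining possibility is $|\lambda|=|m|$ with equality in Cauchy--Schwarz, so $\psi$ is pointwise a $\sigma_3$-eigenvector: either $\psi_2\equiv 0$ or $\psi_1\equiv 0$, according to the sign of $\lambda/m$. Then $m\sigma_3\psi=\lambda\psi$, and the eigenvalue equation reduces to $(-i\sigma\cdot\nabla)\psi=0$; in the complex coordinate $z=x+iy$, the surviving component of $\psi$ is holomorphic (or anti-holomorphic) on $\Omega_\omega$. Since $\mathcal{B}_\n$ is off-diagonal, imposing $\mathcal{B}_\n\psi=\psi$ on a $\psi$ with one vanishing component forces the other component to vanish on $\partial\Omega_\omega$. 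A non-trivial holomorphic function cannot vanish on an open arc of its domain, whence $\psi\equiv 0$, contradicting the assumption.

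The case $m=0$ must be handled separately, since the Virial identity degenerates to $\lambda\|\psi\|^2=0$ and only rules out $\lambda\neq 0$. For $\lambda=m=0$ one has $(-i\sigma\cdot\nabla)\psi=0$, so $\psi_1$ is holomorphic and $\psi_2$ is anti-holomorphic on $\Omega_\omega$, linked on each edge of $\partial\Omega_\omega$ by a unimodular phase read off from $\mathcal{B}_\n$. The cleanest way to close this case is via the angular-momentum decomposition used to prove Theorem \ref{thm:sa}: in each angular channel the radial system at $m=\lambda=0$ admits only pure power solutions $r^\beta$, none of which is in $L^2((0,\infty),\,r\,\mathrm{d}r)$, forcing $\psi\equiv 0$. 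The main obstacle is precisely this $m=0$ endpoint, since the dilation-based Virial is insensitive there; the rest of the argument is essentially a bookkeeping consequence of the scaling identity and the off-diagonal structure of $\mathcal{B}_\n$.
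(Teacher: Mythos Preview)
Your argument follows the same dilation/Virial route as the paper: derive $\lambda\|\psi\|^2=m\langle\psi,\sigma_3\psi\rangle$ from the scaling invariance, conclude $|\lambda|\le|m|$, and then kill the endpoints by reducing to a Cauchy--Riemann equation for the surviving component. The paper closes the endpoint case slightly differently: from $\psi_1\in H_0^1(\Omega_\omega)$ and $(\partial_1+i\partial_2)\psi_1=0$ it gets $\|\nabla\psi_1\|^2=0$ by integration by parts. Your appeal to ``a non-trivial holomorphic function cannot vanish on an open arc of its domain'' is not quite right as phrased, since the arc in question lies on $\partial\Omega_\omega$, not in the open set $\Omega_\omega$; the interior identity theorem does not apply directly, and you need either a boundary uniqueness statement (reflection or Privalov-type) or the paper's integration by parts. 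In the non-convex case both arguments also tacitly use that $\psi_2\equiv0$ forces the singular contribution $c_0(v_++v_-)$ to vanish, so that $\psi\in H^1$ after all. Your separate treatment of $m=\lambda=0$ via the angular decomposition and the observation that the radial equations admit only pure powers $r^\beta\notin L^2((0,\infty),r\,dr)$ is correct and actually sharpens a point the paper leaves implicit, since the Virial identity by itself gives no information on the components when $m=0$.
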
	
\begin{remark}\label{rem:virial}
		The localization of the point spectrum is a consequence of the Virial identity (see in particular \cite[Section 4.7.2]{Thaller1992} and Section \ref{sec:virial}). Nevertheless, this identity gives no information on the existence of point spectrum in $(-|m|,|m|)$ for negative $m$.
	\end{remark}
	\subsection{Organization of the paper}
	In Section \ref{sec:mainproofsa}, we prove Theorem \ref{thm:sa} and state the main lemmas that we need. Their proofs are gathered in Section \ref{sec:lemproofs}. In Section \ref{sec:physsa}, we discuss the physically relevant self-adjoint extensions. Finally, the spectral properties of Proposition \ref{prop:spec} and Proposition \ref{prop:pspec} are proved in Section \ref{sec:spectrum}.
	\section{Self-adjoint extensions of $D$}\label{sec:mainproofsa}
	%
	%
	%\subsection{Main steps of the proof}
	%
	%
	In this section, we state the main lemmas on which rely the proofs of Theorem \ref{thm:sa}\eqref{thm:cc}-\eqref{thm:ncc}. Their proofs are detailed in Section \ref{sec:lemproofs}. Note that without loss of generality, we can assume that $m=0$ since $m\sigma_3$ is a bounded self-adjoint operator.
	\subsection{The operator in polar coordinates}
	%
	%
	%\subsubsection{Change of variables}
	%
	%
	Let us introduce the polar coordinates in $\Omega_\omega$
	\begin{equation}\label{def:polarCoor}
		\begin{split}
				x(r,\theta) = 
				\begin{pmatrix}
					x_1(r,\theta)\\
					x_2(r,\theta))
				\end{pmatrix} 
				 = 
				\begin{pmatrix}
					r\cos(\theta)\\
					r\sin(\theta)
				\end{pmatrix}= re_{rad}(\theta),
		\end{split}
	\end{equation}
	for $r>0$ and $\theta\in(-\omega,\omega)$, where
	\begin{equation}\label{eq:angBase}
		e_{rad}(\theta) = \begin{pmatrix}
			\cos(\theta)\\
			\sin(\theta)
		\end{pmatrix},
		\quad
		e_{ang}(\theta) = \frac{d}{d\theta}e_{rad}(\theta) = \begin{pmatrix}
			-\sin(\theta)\\
			\cos(\theta)
		\end{pmatrix}.
	\end{equation}
	For further use, we recall the following basic relation
	\begin{equation}\label{eqn:basiceqn}
		i\sigma_3 \sigma\cdot e_{ang} = \sigma\cdot e_{rad}.
	\end{equation}
	For all $\Psi\in L^2(\Omega_\omega, \C^2)$, we get
	\[
		\psi(r,\theta) = \Psi(x(r,\theta))
	\]
	belongs to $L^2((0,+\infty),rdr)\otimes L^2((-\omega,\omega),\C^2)$. 
	In this system of coordinates, the Dirac operator rewrites
	\begin{equation}\label{eq:DiracAng}
		\begin{split}
		D 
		&= -i\sigma\cdot e_{rad}\pa_r - \frac{i\sigma\cdot e_{ang}}{r}\pa_\theta 
		=  -i\sigma\cdot e_{rad}\left(\pa_r +i \frac{\sigma_3}{r}\pa_\theta\right)
		\\
		&=  -i\sigma\cdot e_{rad}\left(\pa_r +\frac{1_2 - K}{2r}\right)
		\end{split}
	\end{equation}
	where
	\begin{equation}\label{op:ang}
		K = \sigma_3\left(
			-2i\pa_\theta
		\right)  + 1_2.
	\end{equation}
	In what follows, we rely on properties of $K$ to build invariant subspaces of $D$.
	\subsection{Study of the operator $K$}
	Remark that for all $r>0$, the boundary matrices write
	\begin{equation}\label{def:matpm}
		\begin{split}
		&
		\mathcal{B}_{\n(re^{i\omega})} = -i\sigma_3 \sigma\cdot e_{ang}(\omega) =: \mathcal{B}_+
		\\
		&
		\mathcal{B}_{\n(re^{-i\omega})} = i\sigma_3 \sigma\cdot e_{ang}(-\omega) =: \mathcal{B}_-
		\end{split}
	\end{equation}
	where $\mathcal{B}_\n$ is defined in \eqref{def:BoundaryCond}.
	Now, let us describe the spectral properties of $K$.
	\begin{lemma}\label{prop:angOp}
		The following holds.
		\begin{enumerate}[(i)]
		\item\label{prop:Ksa}
		The operator $(K,\mathcal{D}(K))$ acting on $L^2((-\omega,\omega),\C^2)$ with $K$ defined in \eqref{op:ang} and
		\[
			\mathcal{D}(K) = \{
				u\in H^1((-\omega,\omega),\C^2)\,: \mathcal{B}_+ u(\omega) = u(\omega) \mbox{ and } \mathcal{B}_- u(-\omega) = u(-\omega)
			\}
		\]
		is self-adjoint and has compact resolvent. 
		\item \label{prop:KSp} Its spectrum is
		 \[
		 	\Sp(K) = \left\{\lambda_\kappa\,, \kappa\in \Z\right\}
		 \]
		 with $\lambda_\kappa := \frac{\pi(1+2\kappa)}{2\omega}$. For $\kappa\in \Z$, we have $\ker\left(
				K-\lambda_\kappa
			\right) = \Span(u_{\kappa})$
		where
		 \[
			u_{\kappa} :=\frac{1}{2\sqrt{\omega}}
				\begin{pmatrix}
				e^{i\theta\frac{\lambda_\kappa-1}{2}}\\
				(-1)^{\kappa+1}ie^{-i\theta\frac{\lambda_\kappa-1}{2}}
			\end{pmatrix},
		 \]
		  and $(u_{\kappa})_{\kappa\in \Z}$ is an orthonormal basis of $L^2((-\omega,\omega),\C^2)$.		 %
		 \item \label{prop:Kcomm}
		We have $\left(\sigma\cdot e_{rad} \right)\mathcal{D}(K)\subset \mathcal{D}(K)$,
		$
			\{K,\sigma\cdot e_{rad}\} = 0 
		$
		and 
		\[
			 u_{-(\kappa+1)} = (-1)^\kappa i (\sigma\cdot e_{rad}) u_{\kappa}.
		\]
		\end{enumerate}
	\end{lemma}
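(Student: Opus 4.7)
\begin{proof-sketch}
My plan is to treat $K$ as a first-order constant-coefficient system of ODEs on the bounded interval $(-\omega,\omega)$: I would attack part (ii) first by solving the eigenvalue equation explicitly, and then use the resulting spectral decomposition to obtain self-adjointness and compactness in part (i). Part (iii) is essentially an algebraic consequence of the Pauli matrix identities already recorded in the paper.

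For part (ii), I would solve $Ku=\lambda u$ directly: writing $u=(u_1,u_2)^T$, the equation decouples into $u_j' = \pm i\frac{\lambda-1}{2} u_j$, so solutions are linear combinations of $e^{\pm i(\lambda-1)\theta/2}$. Writing $\mathcal{B}_\pm$ from \eqref{def:matpm} as explicit $2\times 2$ matrices (with off-diagonal entries $\pm e^{\pm i\omega}$), each of the conditions $\mathcal{B}_+ u(\omega)=u(\omega)$ and $\mathcal{B}_- u(-\omega)=u(-\omega)$ gives one linear relation between the two scalar coefficients; their compatibility is the quantization $e^{2i\lambda\omega}=-1$, yielding $\lambda_\kappa = \pi(1+2\kappa)/(2\omega)$. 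Normalization in $L^2((-\omega,\omega),\C^2)$ produces the stated $u_\kappa$. Orthogonality between eigenvectors for distinct eigenvalues of a symmetric operator, together with the density in $L^2((-\omega,\omega))$ of the Fourier-type family $(e^{i\theta(\lambda_\kappa-1)/2})_{\kappa\in\Z}$, then gives that $(u_\kappa)_{\kappa\in\Z}$ is an orthonormal basis.

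For part (i), symmetry follows from integration by parts, which produces boundary terms of the form $\langle u(\pm\omega), \sigma_3 v(\pm\omega)\rangle$. The key algebraic observation is $\{\sigma_3,\mathcal{B}_\pm\}=0$, which follows from the definition $\mathcal{B}_\pm = \mp i\sigma_3(\sigma\cdot e_{ang}(\pm\omega))$ together with the fact that $\sigma_3$ anticommutes with $\sigma_1$ and $\sigma_2$. Consequently $\sigma_3$ exchanges the $\pm 1$ eigenspaces of $\mathcal{B}_\pm$, so these boundary terms vanish for $u,v\in\mathcal{D}(K)$. For self-adjointness I would use the resolvent criterion: for $\lambda\notin\{\lambda_\kappa\}$ and $f=\sum f_\kappa u_\kappa\in L^2$, the candidate $u=\sum(\lambda_\kappa-\lambda)^{-1} f_\kappa u_\kappa$ converges in the graph norm (because $|\lambda_\kappa|\to\infty$ linearly, and the graph norm dominates the $H^1$ norm), satisfies the boundary conditions term by term, and solves $(K-\lambda)u=f$. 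Bijectivity then implies self-adjointness. Compactness of the resolvent follows either from the spectral expansion (eigenvalues tending to infinity) or directly from the compact embedding $H^1((-\omega,\omega),\C^2)\hookrightarrow L^2((-\omega,\omega),\C^2)$, since $\mathcal{D}(K)\subset H^1$.

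For part (iii), domain invariance reduces to checking the boundary conditions: using \eqref{eqn:basiceqn} I get $(\sigma\cdot e_{rad})(\omega)=-\mathcal{B}_+$ and $(\sigma\cdot e_{rad})(-\omega)=\mathcal{B}_-$, so $((\sigma\cdot e_{rad}) u)(\pm\omega)=\mp u(\pm\omega)$, which still lies in the $+1$-eigenspace of $\mathcal{B}_\pm$. The anticommutation $\{K,\sigma\cdot e_{rad}\}=0$ is a direct computation exploiting $\partial_\theta(\sigma\cdot e_{rad}) = \sigma\cdot e_{ang}$, the identity \eqref{eqn:basiceqn}, and $\{\sigma_3,\sigma\cdot e_{rad}\}=0$; the derivative and zeroth-order contributions cancel in pairs. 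This anticommutation then forces $(\sigma\cdot e_{rad})u_\kappa$ to be an eigenvector of $K$ with eigenvalue $-\lambda_\kappa = \lambda_{-(\kappa+1)}$, hence a scalar multiple of $u_{-(\kappa+1)}$, and matching first components (using $e^{i\lambda_\kappa\omega}=i(-1)^\kappa$) fixes the factor $(-1)^\kappa i$. The only mildly delicate analytic point in the whole argument is completeness of $(u_\kappa)$ in $L^2$, on which the resolvent argument in part (i) depends, but this reduces to a standard density statement for trigonometric families on $(-\omega,\omega)$.
\end{proof-sketch}
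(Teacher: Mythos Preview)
Your argument is correct, and for parts (ii) and (iii) it is essentially identical to the paper's. The genuine difference lies in how you establish self-adjointness in part (i).

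The paper argues in the opposite order: it first proves $\mathcal{D}(K^*)\subset\mathcal{D}(K)$ directly, by taking $u\in\mathcal{D}(K^*)$, testing against $C^\infty_c$ functions to get $u\in H^1$, and then rerunning the integration by parts to force $\sigma_3 u(\pm\omega)\in\ker(\mathcal{B}_\pm-1_2)^\perp$, whence $u\in\mathcal{D}(K)$ via Lemma~\ref{lem:PropB}. Compactness of the resolvent then comes from the Rellich embedding, and the completeness of $(u_\kappa)_{\kappa\in\Z}$ follows for free from the spectral theorem for self-adjoint operators with compact resolvent. Your route instead proves completeness of $(u_\kappa)$ first and deduces self-adjointness from surjectivity of $K-\lambda$ on the eigenbasis. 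This is perfectly valid, but it front-loads the one analytic step you flag as ``mildly delicate'': you must show that the $u_\kappa$ span $L^2((-\omega,\omega),\C^2)$ before you can expand an arbitrary $f$. (This does reduce cleanly to a shifted Fourier basis statement: conjugating by the unitary $(f_1,f_2)\mapsto(f_1,f_2(-\cdot))$ turns each $u_\kappa$ into $e^{i\mu_\kappa\theta}$ times one of the two orthogonal vectors $(1,\mp i)$, and the even/odd $\mu_\kappa$ each form a translate of $\frac{\pi}{\omega}\Z$.) The paper's ordering is slightly more economical precisely because it sidesteps this density argument; your ordering has the advantage of making the spectral picture primary and self-adjointness a corollary.
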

	\begin{remark}
		Thanks to Lemma \ref{prop:angOp}\eqref{prop:Kcomm}, we remark that $\Sp(K)$ is symmetric with respect to $0$. 
	\end{remark}
	\begin{remark}\label{rem:commK}
		The wave functions expansion in angular harmonics for the Dirac operator on $\R^2$ has been a major inspiration for this work. In this case, the operator acting in the angular variable is called the spin-orbit operator and is defined by
		\[
			\begin{split}
			&
			\tilde{K} = -2i\pa_\theta + \sigma_3 = \sigma_3K,
			\\
			&
			\mathcal{D}(\tilde K) = H^1(\R/2\pi\Z,\C^2).
			\end{split}
		\]
		It is self-adjoint and commutes with the Dirac operator on $\R^2$ thus, the eigenspaces of $\tilde{K}$ yield invariant subspaces of the full operator.
		We refer to \cite[Section 4.6]{Thaller1992} where the spherical symmetry in $\R^3$ is extensively studied.
		
		In our case, $\tilde{K}$ does not behave well with respect to the infinite mass boundary conditions. Nevertheless, the slight change we have done overcome this difficulty. Below, we list properties of $K$ that motivates its introduction.
		\begin{enumerate}[(a)]
			\item It is a first order operator in the angular variable $\theta$.
			\item Its domain takes into account the infinite mass boundary conditions and renders $K$ self-adjoint.
			\item It has good anticommutation relations with $D$.
		\end{enumerate}
	\end{remark}
	\subsection{Invariant subspaces of $D$}\label{subsec:stabsubspa}
	Now, we introduce invariant subspaces of $D$ and study the resulting restricted operators. The following lemma is an adaptation of \cite[Theorem XIII.85]{reed-4} to our framework.

	\begin{lemma}\label{lem:ssev}
		We have 
		\[
			L^2((0,+\infty),rdr)\otimes L^2((-\omega,\omega),\C^2) = \oplus_{\kappa\geq0} E_\kappa
		\]
		where $E_\kappa = L^2((0,+\infty),rdr)\otimes \Span(u_{\kappa},u_{-(\kappa+1)})$. Moreover, the following holds.
		\begin{enumerate}[(i)]
			\item \label{lempt:welldefssev} For all $\kappa\in \N$, the operator $(d^\kappa,\mathcal{D}(d^\kappa))$ defined by
			\[
				\begin{split}
					&
					\mathcal{D}(d^\kappa) = \mathcal{D}(D)\cap E_\kappa
					\\
					& d^\kappa = {D}_{\big| { E_\kappa}}
				\end{split}
			\]
			is a well-defined unbounded operator on the Hilbert space $E_\kappa$.
			\item \label{lempt:ssevunitequiv} For all $\kappa\in \N$, the operator $(d^\kappa,\mathcal{D}(d^\kappa))$ is unitarily equivalent to the operator $(\mathsf{ d}_\omega^\kappa,\mathcal{D}(\mathsf{d}_\omega^\kappa))$ defined by
			\[
				\begin{split}
					&
					\mathcal{D}(\mathsf{d}_\omega^\kappa) 
					= 
					\left\{
						\begin{pmatrix}
							a\\b
						\end{pmatrix}
						\in L^2((0,+\infty),\C^2,rdr)\, :\
					\right.
					\\
					&
					\left.
					\quad\quad\quad\quad\quad
						 \int_0^\infty \left(|\dot a|^2 + |\dot{b}|^2 + \frac{|\lambda_\kappa-1|^2}{4r^2}|a|^2+ \frac{|\lambda_{\kappa}+1|^2}{4r^2}|b|^2\right)rdr <+\infty
					\right\},
					\\
					& \mathsf{d}_\omega^\kappa 
					= 
					(-1)^\kappa\left(
						i\sigma_2\left(
								\pa_r + \frac{1}{2r}
							\right)
						+\sigma_1\frac{\lambda_\kappa}{2r}
					\right) = (-1)^{\kappa} \begin{pmatrix} 0 & \partial_r + \frac{\lambda_\kappa + 1}{2r}\\ -\partial_r + \frac{\lambda_\kappa - 1}{2r} & 0\end{pmatrix}.
				\end{split}
			\]
			\item \label{lempt:formequad} Let 
				\[
					v = \sum_{\kappa\in \Z} a_{\kappa}u_\kappa
				\]
				be any element of $\mathcal{D}(D)$, we have
				\[
					\|D v\|^2_{L^2(\Omega_\omega,\C^2)} = \sum_{\kappa\in \Z}\int_0^\infty\left(|\dot{a_\kappa}|^2+|\lambda_\kappa-1|^2\frac{|a_\kappa(r)|^2}{4r^2}\right)rdr = \|\nabla v\|^2_{L^2(\Omega_\omega,\C^2)}.
				\]
				\item \label{lempt:closed}
				For all $\kappa\in \N$, the operators $(D,\mathcal{D}(D))$ and $(\mathsf{ d}_\omega^\kappa,\mathcal{D}(\mathsf{d}_\omega^\kappa))$ are symmetric and closed.
				\item \label{lempt:sa}
				Let 
				$\{(\widetilde{\mathsf{ d}_\omega^\kappa},\mathcal{D}(\widetilde{\mathsf{ d}_\omega^\kappa}))| \ \kappa\geq0\}$ be a family of extensions of the operators $\{(\mathsf{ d}_\omega^\kappa,\mathcal{D}(\mathsf{d}_\omega^\kappa))|\ \kappa \geq0\}$. Denote by $(\widetilde D, \mathcal{D}(\widetilde D))$ the extension of $(D,\mathcal{D}(D))$ which satisfies
				\[
					\mathcal{D}(\widetilde D) = \oplus_{\kappa\geq0}\ \mathcal{D}(\widetilde{\mathsf{ d}_\omega^\kappa}).
				\]
				The operator $(\widetilde D, \mathcal{D}(\widetilde D))$ is self-adjoint if and only if the operators\break $(\widetilde{\mathsf{ d}_\omega^\kappa},\mathcal{D}(\widetilde{\mathsf{ d}_\omega^\kappa}))$ are self-adjoint. In this case, we have
				\[
					Sp(\widetilde D) = \bigcup_{\kappa\in\Z} Sp\big(\widetilde{\mathsf{ d}_\omega^\kappa}\big).
				\]
		\end{enumerate}
	\end{lemma}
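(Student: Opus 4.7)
The plan is to first establish the Hilbert space decomposition. By Lemma \ref{prop:angOp}\eqref{prop:KSp}, $(u_\kappa)_{\kappa\in\Z}$ is an orthonormal basis of $L^2((-\omega,\omega),\C^2)$; grouping the indices into pairs $\{\kappa,-(\kappa+1)\}$ for $\kappa\geq 0$ exhausts $\Z$, yielding $\bigoplus_{\kappa\geq 0} E_\kappa$. The choice of this pairing is dictated by Lemma \ref{prop:angOp}\eqref{prop:Kcomm}: since $\sigma\cdot e_{rad}$ anticommutes with $K$ and sends $u_\kappa$ to $u_{-(\kappa+1)}$ up to a phase, the two-dimensional span $\mathrm{span}(u_\kappa,u_{-(\kappa+1)})$ is stable under $\sigma\cdot e_{rad}$, which is the only angular object appearing in the polar form \eqref{eq:DiracAng} of $D$ besides $K$.

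For \eqref{lempt:welldefssev} and \eqref{lempt:ssevunitequiv}, I will write $v \in E_\kappa \cap \mathcal{D}(D)$ as $v(r,\theta) = a(r) u_\kappa(\theta) + b(r) u_{-(\kappa+1)}(\theta)$ and insert this into \eqref{eq:DiracAng}. Since $Ku_\kappa = \lambda_\kappa u_\kappa$, $Ku_{-(\kappa+1)} = -\lambda_\kappa u_{-(\kappa+1)}$, and $-i\sigma\cdot e_{rad}$ exchanges the two basis vectors with explicit signs coming from Lemma \ref{prop:angOp}\eqref{prop:Kcomm}, the image $Dv$ lies again in $E_\kappa$, which gives stability. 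Under the unitary identification $a u_\kappa + b u_{-(\kappa+1)} \leftrightarrow (a,b)^T$, the radial action of $d^\kappa$ transcribes into exactly the off-diagonal matrix operator $\mathsf{d}_\omega^\kappa$. The infinite mass boundary condition at $\theta=\pm\omega$ is automatically encoded because $u_\kappa, u_{-(\kappa+1)} \in \mathcal{D}(K)$, so membership of $v$ in $\mathcal{D}(D)$ reduces to an $H^1$-regularity statement about the radial profile, which in polar coordinates translates into precisely the weighted integrability conditions defining $\mathcal{D}(\mathsf{d}_\omega^\kappa)$.

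For \eqref{lempt:formequad}, I will prove the identity $\|Dv\|^2_{L^2} = \|\nabla v\|^2_{L^2}$ on $\mathcal{D}(D)$ by the standard Dirac integration-by-parts argument: expanding $D^2$ using \eqref{eq:DiracMatrixProp} produces $-\Delta$ plus a boundary contribution that vanishes on $\mathcal{D}(D)$ thanks to the pointwise identity $\mathcal{B}_\n v = v$ (analogous to the computation in \cite{BFSV17}). Expanding $v = \sum_\kappa a_\kappa u_\kappa$, applying Parseval in the angular variable, and computing $|\nabla v|^2$ in polar coordinates then yield the claimed radial decomposition with factor $|\lambda_\kappa - 1|^2/(4r^2)$.

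From \eqref{lempt:formequad}, the graph norm of $D$ coincides with the $H^1$-norm on $\mathcal{D}(D)$; combined with continuity of the trace this gives closedness, and symmetry reduces to the vanishing of the boundary form, which together establishes \eqref{lempt:closed} (closedness of each $\mathsf{d}_\omega^\kappa$ transfers via the unitary equivalence). Finally, \eqref{lempt:sa} is the classical direct-sum result \cite[Theorem XIII.85]{reed-4}: both self-adjointness and the spectrum of $\widetilde D$ decompose over invariant subspaces. The main obstacle I anticipate is \eqref{lempt:ssevunitequiv}: precisely matching the weighted $L^2$-domain $\mathcal{D}(\mathsf{d}_\omega^\kappa)$ with the abstract restriction $\mathcal{D}(D)\cap E_\kappa$ requires control of the $1/r^2$-weights near the origin, where they are singular, together with a careful justification that the angular derivatives of $v$ contribute exactly the factors $|\lambda_\kappa \pm 1|^2/(4r^2)$ in the reduced radial problem.
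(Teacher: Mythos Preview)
Your plan is correct and for the decomposition, \eqref{lempt:welldefssev}, \eqref{lempt:ssevunitequiv}, \eqref{lempt:closed} and \eqref{lempt:sa} it matches the paper's argument essentially line for line (the paper also reduces \eqref{lempt:welldefssev}--\eqref{lempt:ssevunitequiv} to the polar identity \eqref{eq:DiracAng} together with Lemma~\ref{prop:angOp}\eqref{prop:Kcomm}, and invokes \cite[Lemma 4.15]{Thaller1992}/\cite[Theorem VIII.3]{reed-1} for \eqref{lempt:sa}, which is the same content as your reference to \cite[Theorem XIII.85]{reed-4}).

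The one genuine divergence is in \eqref{lempt:formequad}. You propose to obtain $\|Dv\|_{L^2}^2=\|\nabla v\|_{L^2}^2$ by the Cartesian integration-by-parts/Lichnerowicz identity (boundary term vanishing because $\sigma_3$ swaps $\ker(\mathcal B_\n\mp 1)$), and only afterwards expand $\|\nabla v\|^2$ angularly. The paper instead stays entirely in the radial picture: it first writes $\|Dv\|^2=\sum_\kappa\int_0^\infty|\dot a_\kappa+\tfrac{1-\lambda_\kappa}{2r}a_\kappa|^2\,r\,dr$ from \eqref{eq:DiracAng}, and then kills the cross term $\Re\!\int \overline{\dot a_\kappa}\,\tfrac{a_\kappa}{r}\,r\,dr$ using the one-dimensional Lemma~\ref{lem:radfun}. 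Your route is more conceptual and reusable (it is exactly the mechanism behind \cite{BFSV17}), but it hides a mild regularity issue at the corner: the boundary integrand is $\langle u,\sigma_3\partial_\tau u\rangle$, and for $u\in H^1(\Omega_\omega)$ the tangential derivative of the trace lives only in $H^{-1/2}(\partial\Omega_\omega)$, so the vanishing has to be read as a duality pairing (or obtained by density from smoother $u$). The paper's radial argument sidesteps this entirely, at the price of the auxiliary Lemma~\ref{lem:radfun}; either way the identity in \eqref{lempt:formequad} follows.
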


	The following lemma concludes our study. Its proof relies on \cite[Theorem VIII.3]{reed-1}, \cite[Theorem X.2]{reed-2} and some properties of the modified Bessel functions (see for instance \cite[Chapter 12]{olver2014asymptotics} or \cite[Chapter 9]{abramowitz1964handbook}).

	\begin{lemma}\label{lem:bessel}
		The following holds.
			\begin{enumerate}[(i)]
				\item Let $\kappa \geq 1$ and $\omega\in(0,\pi)$. The operator $(\mathsf{ d}_\omega^\kappa,\mathcal{D}(\mathsf{d}_\omega^\kappa))$ is self-adjoint. When $\kappa = 0$, $(\mathsf{ d}_\omega^0,\mathcal{D}(\mathsf{d}_\omega^0))$ is self-adjoint as long as $\omega\in(0,\pi/2]$.
				\item For all $\omega\in(\pi/2,\pi),$ $(\mathsf{ d}_\omega^0,\mathcal{D}(\mathsf{d}_\omega^0))$ is not self-adjoint but has infinitely many self-adjoint extensions $(\mathsf{ d}_\omega^{0,\gamma},\mathcal{D}(\mathsf{d}_\omega^{0,\gamma}))$ defined by
				\[
					\begin{split}
						&\mathcal{D}(\mathsf{d}_\omega^{0,\gamma})
						= \mathcal{D}(\mathsf{d}_\omega^{0}) 
						+ \Span (a_++\gamma\sigma_3a_+),
						\\
						&\mathsf{ d}_\omega^{0,\gamma}(a + c_0(a_++\gamma\sigma_3a_+))
						 = \mathsf{ d}_\omega^{0}a + c_0i(a_+-\gamma\sigma_3a_+),
					\end{split}
				\]
			with $a\in\mathcal{D}(\mathsf{d}_\omega^{0}) $, $c_0\in \C$,
			\[
				a_+ : r\mapsto \begin{pmatrix}
				K_{\frac{\lambda_0-1}{2}}(r)
				\\ -i K_{\frac{\lambda_0+1}{2}}(r)
			\end{pmatrix}
			\]
			and $\gamma\in \C$ such that $|\gamma| = 1$.
			\end{enumerate}
	\end{lemma}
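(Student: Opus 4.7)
The plan is to compute the deficiency indices of $(\mathsf{d}_\omega^\kappa,\mathcal{D}(\mathsf{d}_\omega^\kappa))$ and then invoke von Neumann's extension theorem. Since this operator is closed and symmetric by Lemma \ref{lem:ssev}\eqref{lempt:closed}, self-adjointness reduces to showing triviality of the deficiency spaces $\mathcal{N}_\pm = \ker\bigl((\mathsf{d}_\omega^\kappa)^* \mp i\bigr)$. An element $\psi = (a,b)^T \in L^2((0,\infty);\C^2,r\,dr)$ belongs to $\mathcal{N}_\pm$ iff it solves $\mathsf{d}_\omega^\kappa \psi = \pm i \psi$ distributionally on $(0,\infty)$, and such distributional solutions are automatically classical.

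Writing $\nu_\kappa := (\lambda_\kappa - 1)/2$, the lower row of the system lets me express $b = (-1)^\kappa z^{-1}\bigl(-\partial_r a + (\nu_\kappa/r) a\bigr)$, and substitution into the upper row turns the equation $\mathsf{d}_\omega^\kappa \psi = z \psi$ with $z^2 = -1$ into the modified Bessel equation $a'' + r^{-1} a' - \bigl(1 + \nu_\kappa^2/r^2\bigr) a = 0$. Its two independent solutions are $I_{\nu_\kappa}$ and $K_{\nu_\kappa}$. Using the Bessel identity $-K_\nu'(r) + (\nu/r) K_\nu(r) = K_{\nu+1}(r)$, the corresponding $b$-component of the $K$-branch is proportional to $K_{\nu_\kappa+1}$, and similarly an $I$-branch solution has $b$ proportional to $I_{\nu_\kappa+1}$.

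Next I analyze $L^2$-integrability against the weight $r\,dr$. By Remark \ref{rem:besselasym}, $I_{\nu_\kappa}$ grows like $e^r/\sqrt{r}$ at infinity, so it contributes nothing to the deficiency spaces; only the $K$-branch is admissible. Near the origin, $K_\mu(r) \sim r^{-|\mu|}$ when $\mu \neq 0$ (and $\sim -\log r$ when $\mu = 0$), so $K_\mu \in L^2(r\,dr)$ near $0$ iff $|\mu| < 1$. Consequently, the $K$-branch yields a genuine global $L^2$ solution exactly when both $|\nu_\kappa| < 1$ and $|\nu_\kappa + 1| < 1$; the second condition forces $\nu_\kappa < 0$. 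Since $\nu_\kappa = \frac{(2\kappa+1)\pi - 2\omega}{4\omega}$, negativity can occur only when $\kappa = 0$ and $\omega > \pi/2$. In every other case both deficiency indices vanish, and hence $\mathsf{d}_\omega^\kappa$ is self-adjoint, which is part (i).

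For part (ii), with $\kappa = 0$ and $\omega \in (\pi/2,\pi)$, one has $\nu_0 \in (-1/4, 0)$, hence both $|\nu_0|, |\nu_0+1| < 1$, and the deficiency indices are $(1,1)$. The deficiency vectors for $z = \pm i$ are, up to scalar, $a_+(r) = \bigl(K_{\nu_0}(r),\, -i K_{\nu_0+1}(r)\bigr)^T$ and $\sigma_3 a_+(r) = \bigl(K_{\nu_0}(r),\, i K_{\nu_0+1}(r)\bigr)^T$, obtained from the $b$-formula above with $z = i$ and $z = -i$. Von Neumann's theorem parametrizes the self-adjoint extensions by unitaries $\mathcal{N}_+ \to \mathcal{N}_-$; since both spaces are one-dimensional, such a unitary is multiplication by some $\gamma \in \C$ with $|\gamma| = 1$, and this produces precisely the family $\mathsf{d}_\omega^{0,\gamma}$ in the statement with the prescribed action. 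The main technical step is the careful simultaneous asymptotic analysis of the two components near $r = 0$, exploiting $K_\nu = K_{-\nu}$ to handle the regime $\nu_0 < 0$.
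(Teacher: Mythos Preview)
Your argument is correct and follows essentially the same route as the paper: compute the deficiency spaces by reducing the first-order system to modified Bessel equations, discard the $I$-branch by the asymptotics at infinity, and test $L^2$-integrability of the $K$-branch near $r=0$ to see that nontrivial deficiency occurs only when $\kappa=0$ and $\omega>\pi/2$, then invoke von Neumann's theorem. The only cosmetic differences are that the paper squares the operator to decouple the system (rather than substituting one component into the other) and uses the anticommutation $\{\mathsf{d}_\omega^\kappa,\sigma_3\}=0$ to pass from $\mathcal{N}_+$ to $\mathcal{N}_-$ instead of computing both deficiency vectors directly; also note that Remark~\ref{rem:besselasym} records only the asymptotics of $K_\nu$, so for the growth of $I_\nu$ you should cite the Bessel references directly.
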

	Theorem \ref{thm:sa}\eqref{thm:cc}-\eqref{thm:ncc} follow from Lemmas \ref{prop:angOp}, \ref{lem:ssev} and \ref{lem:bessel}.
\section{Proofs of Lemmas \ref{prop:angOp}, \ref{lem:ssev} and \ref{lem:bessel}}\label{sec:lemproofs}
	In \S \ref{subsec:3.1}, we gather basic results that are necessary in what remains of this section. \S \ref{subsec:3.2}, \S \ref{subsec:3.3} and \S \ref{subsec:3.4} deal with the proofs of Lemmas \ref{prop:angOp}, \ref{lem:ssev} and \ref{lem:bessel}, respectively.
	\subsection{Preliminary study}\label{subsec:3.1}
	The following lemma is about basic spectral properties of the matrices $\mathcal{B}_\vecteur$ defined in \eqref{def:BoundaryCond}.
\begin{lemma}\label{lem:PropB}
		For all unit vector $\vecteur\in \R^2$, the matrix $\mathcal{B}_\vecteur$ satisfies
		\[
			\ker (\mathcal{B}_\vecteur\pm1_2) = \sigma_3\ker (\mathcal{B}_\vecteur\pm1_2)^\perp = \sigma\cdot \vecteur\ker (\mathcal{B}_\vecteur\pm1_2)^\perp.
		\] 
	\end{lemma}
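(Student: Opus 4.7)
The plan is to reduce the identity to two simple anticommutation relations between $\mathcal{B}_\vecteur$ and each of the matrices $\sigma_3$ and $\sigma\cdot\vecteur$. Since $\vecteur\in\R^2$, the matrix $\sigma\cdot\vecteur$ is a linear combination of $\sigma_1$ and $\sigma_2$, and both of these anticommute with $\sigma_3$; so $\{\sigma_3,\sigma\cdot\vecteur\}=0$. From the definition $\mathcal{B}_\vecteur = -i\sigma_3\sigma\cdot\vecteur$ one then gets
\[
  \mathcal{B}_\vecteur\sigma_3 = -i\sigma_3(\sigma\cdot\vecteur)\sigma_3 = i\sigma_3^2(\sigma\cdot\vecteur) = i(\sigma\cdot\vecteur) = -\sigma_3\mathcal{B}_\vecteur,
\]
and, using $(\sigma\cdot\vecteur)^2 = |\vecteur|^2 1_2 = 1_2$,
\[
  \mathcal{B}_\vecteur(\sigma\cdot\vecteur) = -i\sigma_3(\sigma\cdot\vecteur)^2 = -i\sigma_3 = -(\sigma\cdot\vecteur)\mathcal{B}_\vecteur.
\]

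With these in hand, if $u\in\C^2$ satisfies $\mathcal{B}_\vecteur u = \pm u$, then $\mathcal{B}_\vecteur(\sigma_3 u) = -\sigma_3\mathcal{B}_\vecteur u = \mp\,\sigma_3 u$ and similarly $\mathcal{B}_\vecteur(\sigma\cdot\vecteur\, u) = \mp\,(\sigma\cdot\vecteur)u$. Thus both $\sigma_3$ and $\sigma\cdot\vecteur$ send $\ker(\mathcal{B}_\vecteur\mp 1_2)$ into $\ker(\mathcal{B}_\vecteur\pm 1_2)$. Since $\sigma_3^2 = 1_2$ and $(\sigma\cdot\vecteur)^2 = 1_2$, these maps are involutions, hence bijections, which gives the identities
\[
  \ker(\mathcal{B}_\vecteur\pm 1_2) = \sigma_3\ker(\mathcal{B}_\vecteur\mp 1_2) = (\sigma\cdot\vecteur)\ker(\mathcal{B}_\vecteur\mp 1_2).
\]

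It remains only to observe that \eqref{eq:propB} says $\mathcal{B}_\vecteur$ is hermitian with spectrum $\{\pm 1\}$, so $\C^2$ is the orthogonal direct sum of its two one-dimensional eigenspaces; equivalently, $\ker(\mathcal{B}_\vecteur\mp 1_2) = \ker(\mathcal{B}_\vecteur\pm 1_2)^\perp$. Substituting this into the right-hand sides above gives exactly the statement of the lemma. There is no real obstacle: the entire argument is a short application of the Pauli algebra together with the spectral decomposition of the $2\times 2$ hermitian involution $\mathcal{B}_\vecteur$.
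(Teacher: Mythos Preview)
Your proof is correct and follows essentially the same approach as the paper: both arguments rest on the anticommutation relations $\{\sigma_3,\mathcal{B}_\vecteur\}=0$ and $\{\sigma\cdot\vecteur,\mathcal{B}_\vecteur\}=0$ together with the fact that $\mathcal{B}_\vecteur$ is a hermitian involution. The only cosmetic difference is that the paper phrases the eigenspace swap via $\ker(\mathcal{B}_\vecteur\pm1_2)^\perp=\ran(\mathcal{B}_\vecteur\pm1_2)$ and $\sigma_3\ran(\mathcal{B}_\vecteur\pm1_2)=\ran(\mathcal{B}_\vecteur\mp1_2)$, whereas you work directly with eigenvectors; the content is the same.
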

	\begin{proof}
	Since $\{\sigma_3,\mathcal{B}_\vecteur\} = 0$, we have 
		\[
			\sigma_3\ker (\mathcal{B}_\vecteur\pm1_2)^\perp = \sigma_3 \ran (\mathcal{B}_\vecteur\pm1_2) = \ran (\mathcal{B}_\vecteur\mp1_2) = \ker (\mathcal{B}_\vecteur\pm1_2).
		\]
		Moreover, as $\{\sigma\cdot \vecteur,\mathcal{B_\vecteur}\} = 0$, the same proof yields the other equality.
	\end{proof}

	For the sake of completeness, we recall the following standard result on the symmetry of the Dirac operator with infinite mass boundary conditions.
	\begin{lemma}\label{lem:sym}
	 	The operator $(D, \mathcal{D}(D))$ is symmetric and densely defined.
	\end{lemma}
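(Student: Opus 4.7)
The plan is to establish density and symmetry separately, with the symmetry being the only place where the boundary condition $\mathcal{B}_{\n}u=u$ plays any role.

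For density, I would simply note that any $u\in C^\infty_c(\Omega_\omega,\C^2)$ trivially satisfies $\mathcal{B}_{\n}u=u$ on $\pa\Omega_\omega$ because both sides vanish, so $C^\infty_c(\Omega_\omega,\C^2)\subset\mathcal{D}(D)$. Since $C^\infty_c(\Omega_\omega,\C^2)$ is dense in $L^2(\Omega_\omega,\C^2)$, density of the domain is immediate.

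For symmetry, the strategy is standard: take $u,v\in\mathcal{D}(D)\subset H^1(\Omega_\omega,\C^2)$, discard the mass term (it is bounded self-adjoint and contributes nothing to the antisymmetric part), and integrate by parts to obtain
\[
\langle Du,v\rangle-\langle u,Dv\rangle=-i\int_{\pa\Omega_\omega}\overline{u}^{\,T}(\sigma\cdot\n)\,v\,\mathrm{d}s.
\]
Since $\Omega_\omega$ is a Lipschitz domain (a corner at the origin and two half-lines), the trace theorem on each side of the boundary makes this boundary integral well-defined for $H^1$ functions. The task then reduces to proving that the pointwise integrand vanishes almost everywhere on $\pa\Omega_\omega$ when both $u$ and $v$ obey the boundary condition.

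The key algebraic step is to rewrite $\sigma\cdot\n=i\sigma_3\mathcal{B}_{\n}$, which follows from the definition $\mathcal{B}_{\n}=-i\sigma_3\sigma\cdot\n$ together with $\sigma_3^2=1_2$. Applying this and using $\mathcal{B}_{\n}v=v$ gives $\overline{u}^{\,T}(\sigma\cdot\n)v=i\,\overline{u}^{\,T}\sigma_3 v$. Then, using $\mathcal{B}_{\n}u=u$ (equivalently $\overline{u}^{\,T}\mathcal{B}_{\n}=\overline{u}^{\,T}$ since $\mathcal{B}_{\n}$ is Hermitian by \eqref{eq:propB}) together with the anticommutation $\{\sigma_3,\mathcal{B}_{\n}\}=0$ (a direct check from $\mathcal{B}_{\n}=-i\sigma_3\sigma\cdot\n$ and $\{\sigma_3,\sigma_k\}=0$ for $k=1,2$), one finds $\overline{u}^{\,T}\sigma_3 v = -\overline{u}^{\,T}\sigma_3 v$, hence $\overline{u}^{\,T}\sigma_3 v=0$. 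This forces the boundary integrand to vanish and yields $\langle Du,v\rangle=\langle u,Dv\rangle$.

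I do not foresee a serious obstacle here: the whole argument is a short manipulation of the Pauli matrices combined with the trace theorem on a Lipschitz domain. The only minor point to be careful about is sign conventions in the integration by parts and the fact that the origin is a corner of $\pa\Omega_\omega$, but since $H^1$ traces are well-defined on Lipschitz boundaries this causes no issue.
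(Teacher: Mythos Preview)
Your proposal is correct and follows essentially the same route as the paper: integration by parts on the Lipschitz domain $\Omega_\omega$ to reduce symmetry to vanishing of the boundary integrand, followed by an algebraic check using the properties of $\mathcal{B}_\n$. The paper packages the algebraic step as the kernel statement $\sigma\cdot\n\,\ker(\mathcal{B}_\n-1_2)=\ker(\mathcal{B}_\n-1_2)^\perp$ (Lemma~\ref{lem:PropB}), which is proved precisely \emph{via} the anticommutation $\{\sigma_3,\mathcal{B}_\n\}=0$ that you use directly; and you also spell out the density argument, which the paper leaves implicit.
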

	\begin{proof}
		Let $u,v\in\mathcal{D}(D)$. Since $\Omega_\omega$ is a Lipschitz domain, an integrations by parts yields
		\[
			\braket{D u,v}_{L^2(\Omega_\omega,\C^2)} - \braket{u,D v}_{L^2(\Omega_\omega,\C^2)}
			= \braket{-i\sigma\cdot \n u,v}_{L^2(\partial\Omega_\omega,\C^2)},
		\]
		(see  \cite[Section 3.1.2]{necas2011direct}).
		Thanks to Lemma \ref{lem:PropB}, almost everywhere on the boundary we have
		\[
			-i(\sigma\cdot \n) u\in \sigma\cdot \n\ker (\mathcal{B}_\n-1_2) = \ker (\mathcal{B}_\n-1_2)^\perp.
		\]
		Thus $\braket{-i\sigma\cdot \n u,v}_{L^2(\partial\Omega_\omega,\C^2)}  = 0$ and we obtain
		 \[
		 	\braket{D u,v}_{L^2(\Omega_\omega,\C^2)} = \braket{u,D v}_{L^2(\Omega_\omega,\C^2)}.
		 \]
	\end{proof}
	\subsection{Study of the angular part :  proof of Lemma \ref{prop:angOp}}\label{subsec:3.2}
		The proof is divided into several steps.
		\subsubsection*{Step 1: symmetry of $K$.}
		Let $u,v\in \mathcal{D}(K)$, an integration by parts and Lemma \ref{lem:PropB} yield
		\begin{align}\label{eq:intByPartsK}
			%\begin{split}
				%
				&\braket{Ku,v}_{L^2((-\omega,\omega),\C^2)}-\braket{u,Kv}_{L^2((-\omega,\omega),\C^2)} \nonumber\\
				&=
				\int_{-\omega}^\omega\pa_\theta\big(\braket{-2i\sigma_3u,v}_{\C^2}\big)d\theta 
				\nonumber\\
				&=\braket{-2i\sigma_3u(\omega),v(\omega)}_{\C^2}-\braket{-2i\sigma_3u(-\omega),v(-\omega)}_{\C^2} =0.
				%
			%\end{split}
		\end{align}
		Hence, $K$ is symmetric. 
		\subsubsection*{Step 2: self-adjointness of $K$}
		Let $u\in \mathcal{D}(K^*)$. Using test functions in\break $C^\infty_c((-\omega,\omega),\C^2)\subset \mathcal{D}(K)$, we remark that the distribution $Ku$ belongs to $L^2((-\omega,\omega),\C^2)$ which gives $u\in H^1((-\omega,\omega),\C^2)$. Performing again integration by parts \eqref{eq:intByPartsK}, we get
		\[
			\begin{split}
				&\sigma_3u(\omega) \in \ker (\mathcal{B}_+-1_2)^\perp,\\
				&\sigma_3u(-\omega) \in \ker (\mathcal{B}_--1_2)^\perp.
			\end{split}
		\]
		Thanks to Lemma \ref{lem:PropB}, we obtain that $u$ belongs to $\mathcal{D}(K)$ and thus $K$ is self-adjoint. Finally, the compact Sobolev embedding 
		\[
			H^1((-\omega,\omega),\C^2)\hookrightarrow L^2((-\omega,\omega),\C^2),
		\]
		implies that $K$ has compact resolvent. Hence, its spectrum is discrete and it concludes the proof of Lemma \ref{prop:angOp}\eqref{prop:Ksa}.
		\subsubsection*{Step 3: study of $Sp(K)$}
		 Let $\lambda\in \R$, we look for solutions of 
		\begin{equation}\label{eq:Kspec}
			Ku = \lambda u
		\end{equation}
		belonging to $\mathcal{D}(K)$. Remark that without taking the boundary conditions into account, the set of solutions of \eqref{eq:Kspec} is the vector space
		\[
			E_\lambda^1  := \Span \left(
				\begin{pmatrix}
					e^{i\theta\frac{\lambda-1}{2}}\\
					0
				\end{pmatrix},
				\begin{pmatrix}
					0\\
					e^{-i\theta\frac{\lambda-1}{2}}
				\end{pmatrix}
			\right).
		\]
		Assume $u \in E^1_\lambda\cap \mathcal{D}(K)$. In particular, $u$ writes
		\[
			u =\begin{pmatrix}
				ae^{i\theta\frac{\lambda-1}{2}}\\
				be^{-i\theta\frac{\lambda-1}{2}}
			\end{pmatrix}
		\]
		for some constants $a,b\in\C$. Using \eqref{eqn:basiceqn}, the boundary conditions read
		 {\small
		 \[
		 	\begin{split}
			&
		 	u(\omega) = \mathcal{B}_+ u(\omega) = -\sigma\cdot e_{rad}(\omega)u(\omega) = -\begin{pmatrix}
				0&e^{-i\omega}
				\\
				e^{i\omega}&0
			\end{pmatrix}u(\omega)
			= 
			\begin{pmatrix}
				-be^{-i\omega\frac{\lambda+1}{2}}\\
				-ae^{i\omega\frac{\lambda+1}{2}}
			\end{pmatrix},
			\\
			&
			u(-\omega) = \mathcal{B}_- u(-\omega) = \sigma\cdot e_{rad}(-\omega)u(-\omega) = \begin{pmatrix}
				0&e^{i\omega}
				\\
				e^{-i\omega}&0
			\end{pmatrix}u(-\omega)
			= 
			\begin{pmatrix}
				be^{i\omega\frac{\lambda+1}{2}}\\
				ae^{-i\omega\frac{\lambda+1}{2}}
			\end{pmatrix}.
			\end{split}
		 \]
		 }

		\noindent It yields
		 \[
		 	a = be^{i\omega\lambda} = -be^{-i\omega\lambda}.
		 \]
		 Hence, there is a nontrivial solution of \eqref{eq:Kspec} belonging to $\mathcal{D}(K)$ if and only if $e^{2i\omega\lambda} = -1$. We deduce that the spectrum of $K$ is 
		 \[
		 	\Sp(K) = \left\{\frac{\pi(1+2\kappa)}{2\omega}\,,\ \kappa\in \Z\right\}.
		 \]
		 For $\kappa \in \Z$, define $\lambda_\kappa := \frac{\pi(1+2\kappa)}{2\omega}$. If $\kappa$ is even, we have $a = ib$ and
		 \[
		 	\ker\left(
				K-\lambda_\kappa
			\right)
			=
			\Span
				\begin{pmatrix}
				e^{i\theta\frac{\lambda_\kappa-1}{2}}\\
				-ie^{-i\theta\frac{\lambda_\kappa-1}{2}}
			\end{pmatrix}
			,
		 \]
		 if $\kappa$ is odd, we have $a = -ib$ and
		  \[
		 	\ker\left(
				K-\lambda_\kappa
			\right)
			=
			\Span
				\begin{pmatrix}
				e^{i\theta\frac{\lambda_\kappa-1}{2}}\\
				ie^{-i\theta\frac{\lambda_\kappa-1}{2}}
			\end{pmatrix}
			.
		 \]
		This proves Lemma \ref{prop:angOp}\eqref{prop:KSp}.
		\subsubsection*{Step 4: the commutation relation}
		Since $\sigma\cdot e_{rad}$ commutes with $\mathcal{B}_\n$, we obtain
		\[
			(\sigma\cdot e_{rad})\mathcal{D}(K)\subset \mathcal{D}(K).
		\]
		We also have
		\[
			\begin{split}
				K\sigma\cdot e_{rad} &= \sigma_3\left(\sigma\cdot e_{rad}(-2i\pa_\theta) -2i\sigma\cdot e_{ang}\right) + \sigma\cdot e_{rad}
				\\
				& 
				= -\sigma\cdot e_{rad}\sigma_3(-2i\pa_\theta)- \sigma\cdot e_{rad} = -\sigma\cdot e_{rad}K.
			\end{split}
		\]
		This ends the proof of Lemma \ref{prop:angOp}\eqref{prop:Kcomm}.
		%
%		\subsubsection*{Step 5 : Proof of Point \eqref{Prop:ControlK}}
%		%
%		%
%		TO WRITE.
		
		%
	%\end{proof}
		%
		%
		\subsection{Invariant subspaces: proof of Lemma \ref{lem:ssev}}\label{subsec:3.3}
			Let us remark that the direct sum decomposition is a direct consequence of Lemma \ref{prop:angOp}\eqref{prop:KSp}.
			What remains of the proof is divided into several steps.
			\subsubsection*{Proof of Points \eqref{lempt:welldefssev} and \eqref{lempt:ssevunitequiv}}
			These points follow from identity \eqref{eq:DiracAng}
			\[
				D 
				=  -i\sigma\cdot e_{rad}\left(\pa_r +\frac{1_2 - K}{2r}\right)
			\]
			and Lemma \ref{prop:angOp}\eqref{prop:Kcomm}. 
			Indeed, for all $\kappa\in \N$ and all $v\in E_\kappa$ there exist $a,b\in L^2((0,+\infty),rdr)$ such that for all $r>0$ and all $\theta\in(-\omega,\omega)$, $v$ writes
			\[
				v(r,\theta) =  a(r)u_\kappa(\theta) +b(r)u_{-(\kappa+1)}(\theta).
			\]
			If $v\in H^1(\Omega_\omega,\C^2)$, since $-i\sigma_3\pa_\theta = \frac{K-1}{2}$, we have
					\[
						\|\nabla v\|_{L^2(\Omega_\omega,\C^2)}^2 = \int_0^\infty \left(|\dot a|^2 + |\dot{b}|^2 + \frac{|\lambda_\kappa-1|^2}{4r^2}|a|^2+ \frac{|\lambda_{-(\kappa+1)}-1|^2}{4r^2}|b|^2\right)rdr
					\]
			and
			\[
					D v = d^\kappa v =  (-1)^{\kappa+1}u_{-(\kappa+1)}\left(\dot{a} + \frac{1-\lambda_\kappa}{2r}a\right) + (-1)^{\kappa} u_\kappa\left(\dot{b} + \frac{1+\lambda_\kappa}{2r}b\right).
			\]
			This ends this part of the proof.
			\subsubsection*{Proof of Points \eqref{lempt:formequad} and \eqref{lempt:closed}}
			 Let $v\in\mathcal{D}(D)$. Decomposing $v$ in the orthonormal basis $(u_\kappa)_{\kappa\in\Z}$, it writes
				\[
					v = \sum_{\kappa\in \Z} a_{\kappa}u_\kappa.
				\]
				Using Lemma \ref{lem:radfun}, we have

					\begin{align*}
					\|D v\|^2_{L^2(\Omega_\omega,\C^2)} 
					&= 
					\sum_{\kappa\in \Z}\int_0^\infty\left|\dot{a_\kappa} + \frac{1-\lambda_\kappa}{2r}a_\kappa\right|^2rdr\\
					&=
					\sum_{\kappa\in \Z}\int_0^\infty\left(|\dot{a_\kappa}|^2+|\lambda_\kappa-1|^2\frac{|a_\kappa(r)|^2}{4r^2}\right)rdr
					\\
					& = \|\nabla v\|^2_{L^2(\Omega_\omega,\C^2)}.
					\end{align*}
			and Lemma \ref{lem:ssev}\eqref{lempt:closed} is proved.
			\subsubsection*{Proof of Point \eqref{lempt:sa}}
			This last point is proved as in \cite[Lemma 4.15]{Thaller1992} using \cite[Theorem VIII.3]{reed-1}.
			%
		%\end{proof}
		%
		%
		\subsection{Proof of Lemma \ref{lem:bessel}}\label{subsec:3.4}
		%
		%
%			\begin{lemma}\label{lem:bessel}
%		%
%		The following holds true.
%		%
%		%
%			\begin{enumerate}[(i)]
%				\item For any $\kappa \geq 1$ and any $\omega\in(0,\pi),$ or $\kappa =0$ and any $\omega\in(0,\pi/2),$ the operator $(\mathsf{ d}_\omega^\kappa,\mathcal{D}(\mathsf{d}_\omega^\kappa))$ is self-adjoint
%				%
%				\item For any $\omega\in(\pi/2,\pi),$ $(\mathsf{ d}_\omega^0,\mathcal{D}(\mathsf{d}_\omega^0))$ is not self-adjoint but has several self-adjoint extensions $(\mathsf{ d}_\omega^{0,\gamma},\mathcal{D}(\mathsf{d}_\omega^{0,\gamma}))$ defined by
%				%
%				\[
%					\begin{split}
%						&\mathcal{D}(\mathsf{d}_\omega^{0,\gamma})
%						= \mathcal{D}(\mathsf{d}_\omega^{0}) 
%						+ \Span (a_++\gamma\sigma_3a_+)
%						\\
%						&\mathsf{ d}_\omega^{0,\gamma}(a + c_0(a_++\gamma\sigma_3a_+))
%						 = \mathsf{ d}_\omega^{0}a + c_0i(a_+-\gamma\sigma_3a_+)
%					\end{split}
%				\]
%			%
%			where $a\in\mathcal{D}(\mathsf{d}_\omega^{0}) $, $c_0\in \C$,
%			%
%			\[
%				a_+ = \begin{pmatrix}
%				K_{\frac{\lambda_\kappa-1}{2}}(r)
%				\\(-1)^\kappa i K_{\frac{\lambda_\kappa+1}{2}}(r)
%			\end{pmatrix},
%			\]
%			%
%			$\gamma\in \C$ is such that $|\gamma| = 1$.
%			%
%			\end{enumerate}
%		%
%	\end{lemma}
%		%
		%
		%
		Let $\kappa\in \N$.
		In this proof, we apply the basic criterion for self-adjointness \cite[Theorem VIII.3]{reed-1}. In particular, we have to study the vector spaces
		\[
			\ker(({\mathsf{d}_\omega^\kappa})^*\pm i1_2).
		\]
		Remark that
		\[
			\mathcal{D}(({\mathsf{d}_\omega^\kappa})^*) 
			\subset \left\{ \begin{pmatrix}
				a\\b
			\end{pmatrix}\in L^2((0,\infty),rdr)^2\,:\
				{\mathsf{d}_\omega^\kappa}\begin{pmatrix}
				a\\b
			\end{pmatrix}\in L^2((0,\infty),rdr)^2
			 \right\}.
		\]
		Since $\{\mathsf{d}_\omega^\kappa,\sigma_3\}=0$, we obtain
		\[
			\ker(({\mathsf{d}_\omega^\kappa})^*-i1_2) = \sigma_3\ker(({\mathsf{d}_\omega^\kappa})^*+i1_2).
		\]
		Hence, it remains to look for $L^2((0,\infty),rdr)^2$-solutions of 
		\begin{equation}\label{eqn:soladj}
			\left(\mathsf{d}_\omega^\kappa- i1_2\right) \begin{pmatrix}
				a\\b
			\end{pmatrix} = 0.
		\end{equation}
		It is well known that the set of solutions is a vector space of dimension $2$ and moreover, the solutions are smooth on $(0,\infty)$. Remark that
		\[
			\left(\mathsf{d}_\omega^\kappa+ i1_2\right)\left(\mathsf{d}_\omega^\kappa- i1_2\right) = (\mathsf{d}_\omega^\kappa)^2+1_2,
		\] 
		which implies
		{\small
		\begin{multline*}
		 	0 =\left((\mathsf{d}_\omega^\kappa)^2+ 1_2\right) \begin{pmatrix}
				a\\b
			\end{pmatrix} 
			\\= -\frac{1}{r^2}\begin{pmatrix}
				r^2\pa_r^2 + r\pa_r - \left(r^2+\frac{(\lambda_\kappa-1)^2}{4}\right) & 0\\
				0 & r^2\pa_r^2 + r\pa_r - \left(r^2+\frac{(\lambda_\kappa+1)^2}{4}\right)
			\end{pmatrix} \begin{pmatrix}
				a\\b
			\end{pmatrix}.
		\end{multline*}
		}
		
	\noindent	Thus, $a$ and $b$ are modified Bessel functions (see \cite[Chapter 12, Section 1]{olver2014asymptotics}  and \cite{abramowitz1964handbook}) of parameters $\frac{\lambda_\kappa-1}{2}$ and $\frac{\lambda_\kappa+1}{2}$, respectively. 
		The modified Bessel functions of the first kind do not belong to $L^2((1,\infty),rdr)$. Consequently, for $(a,b)$ to belong to $L^2((1,\infty),rdr)^2$ $a$ and $b$ necessarily write $a=a_0K_{\frac{\lambda_\kappa-1}{2}}$ and $b = b_0K_{\frac{\lambda_\kappa+1}{2}}$ with $a_0,b_0\in \C$. Recall that $K_{\nu}$ denotes the modified Bessel function of the second kind of parameter $\nu\in \R$. It is known that $K_{\nu} = K_{-\nu}$ and
		\begin{equation}\label{eq:difBess}
			\begin{split}
				\dot{K_{|\nu|}}(r) + \frac{|\nu|}{r}K_{|\nu|}(r) = -K_{|\nu|-1}(r)
			\end{split}
		\end{equation}
		for all $r>0$.
		Thanks to Remark \ref{rem:besselasym}, for $b$ to belong to $L^2((0,1),rdr)$ one needs
		$
			\lambda_\kappa<1.		
		$
		We have
			\begin{enumerate}[(a)]
				\item\label{cas1} $\lambda_\kappa\geq 3/2 \mbox{ for any }\kappa \geq 1 \mbox{ and any }\omega\in(0,\pi),$
				\item\label{cas2}$\lambda_0\geq1 \mbox{ for any }\omega\in(0,\pi/2],$
				\item\label{cas3}$\lambda_0<1 \mbox{ for any }\omega\in(\pi/2,\pi).$
			\end{enumerate}
		Hence,  in Cases \eqref{cas1} and \eqref{cas2}, we get $b_0=0$. Taking into account \eqref{eqn:soladj} and \eqref{eq:difBess} we also get $a_0 = 0$ which implies
		\[
			\ker(({\mathsf{d}_\omega^\kappa})^*\pm i1_2) = \{0\}
		\]
		and \cite[Theorem VIII.3]{reed-1} ensures that $(\mathsf{d}_\omega^\kappa,\mathcal{D}(\mathsf{d}_\omega^\kappa))$ is a self-adjoint operator. 
		In Case \eqref{cas3}, we get
		\[
			 \begin{pmatrix}
				a\\b
			\end{pmatrix} 
			\in \Span (a_+^0),\mbox{ with }a_+^0 =  
			 \begin{pmatrix}
				K_{\frac{\lambda_0-1}{2}}(r)
				\\- i K_{\frac{\lambda_0+1}{2}}(r)
			\end{pmatrix} .
		\]
		Actually, $a_+^0$ belongs to $\mathcal{D}(({\mathsf{d}_\omega^\kappa})^*)$ which yields
		\[
			\ker(({\mathsf{d}_\omega^\kappa})^*- i1_2) = \Span (a_+^0) \mbox{ and } \ker(({\mathsf{d}_\omega^\kappa})^*+ i1_2) = \Span (\sigma_3 a_+^0).
		\]
		We conclude thanks to \cite[Theorem X.2]{reed-2}.
		\section{Distinguished self-adjoint extensions of $D$}\label{sec:physsa}
		The goal of this section is to prove Propositions \ref{prop:charge_conj} and \ref{prop:dist_ext} about the distinguished extensions of $\big(D,\mathcal{D}(D)\big)$ when $\omega\in(\pi/2,\pi)$.
		\subsection{Proof of Proposition \ref{prop:charge_conj}}
		The anticommutation of $C$ with $D$ is straightforward. The only thing left to prove is the following lemma.
		\begin{lemma}\label{lemma:}Let $\omega\in(\pi/2,\pi)$ and let $\gamma\in\C$ be such that $|\gamma| = 1$. The following statements are equivalent.
			\begin{enumerate}[(a)]
				\item\label{itm:a} $\gamma = \pm 1$.
				\item\label{itm:b} $C \mathcal{D}(D^\gamma) \subset \mathcal{D}(D^\gamma)$ and $D^\gamma C = -C D^\gamma$.
			\end{enumerate}
		\end{lemma}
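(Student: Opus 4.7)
The plan is to use the concrete description $\mathcal{D}(D^\gamma) = \mathcal{D}(D) \oplus \mathbb{C}(v_+ + \gamma v_-)$ and reduce the question to the action of $C$ on a two-dimensional quotient. First, one may assume $m=0$, because $C\sigma_3 = -\sigma_3 C$ and so the mass term contributes nothing to the anticommutation relation. Next I would check that $C$ preserves $\mathcal{D}(D)$ and satisfies $\{C,D\}=0$ there: verifying $C\mathcal{B}_{\n} = \mathcal{B}_{\n} C$ is a short Pauli-matrix computation using $\overline{\sigma_1}=\sigma_1$, $\overline{\sigma_2}=-\sigma_2$, $\overline{\sigma_3}=\sigma_3$ and $\sigma_1\sigma_3 = -\sigma_3\sigma_1$, and the free anticommutation $DC = -CD$ on $\mathcal{D}(D)$ is standard. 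Hence the whole statement reduces to understanding how $C$ acts on the extra vector $v_+ + \gamma v_-$.

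The key computation is that $C$ acts as multiplication by $i$ on both $v_+$ and $v_-$. Since $K_\nu(r)\in\mathbb{R}$, a direct calculation on the explicit formulas for $u_0, u_{-1}$ in Theorem \ref{thm:sa}\eqref{thm:ncc} gives
\[
\sigma_1 \overline{u_0} = i\, u_0, \qquad \sigma_1 \overline{u_{-1}} = -i\, u_{-1},
\]
from which $Cv_+ = iv_+$ and $Cv_- = iv_-$. This is consistent with the abstract fact that $CD^\ast = -D^\ast C$ forces $C$ to stabilise each of the two deficiency subspaces $\ker(D^\ast \mp i)$ individually (rather than interchange them, as one might naively expect), and this is really the crux of the argument.

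Using antilinearity of $C$ one then obtains
\[
C(v_+ + \gamma v_-) = iv_+ + i\bar\gamma\, v_- = i\bigl(v_+ + \bar\gamma\, v_-\bigr).
\]
Since $\{v_+,v_-\}$ is linearly independent modulo $\mathcal{D}(D)$ (it spans the two-dimensional deficiency subspace), the membership $C(v_+ + \gamma v_-) \in \mathcal{D}(D) \oplus \mathbb{C}(v_+ + \gamma v_-)$ is equivalent to the existence of $c\in\mathbb{C}$ with $c = i$ and $i\bar\gamma = c\gamma = i\gamma$, i.e. $\bar\gamma = \gamma$; under $|\gamma|=1$ this forces $\gamma = \pm 1$. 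This proves \eqref{itm:b}$\Rightarrow$\eqref{itm:a}. For the converse, when $\gamma = \pm 1$ the identity $C(v_+ + \gamma v_-) = i(v_+ + \gamma v_-)$ gives the invariance $C\mathcal{D}(D^\gamma)\subset\mathcal{D}(D^\gamma)$, and the anticommutation on the extra direction follows from a one-line check using $D^\gamma(v_+ + \gamma v_-) = i(v_+ - \gamma v_-)$:
\[
D^\gamma C(v_+ + \gamma v_-) = -(v_+ - \gamma v_-) = -C D^\gamma(v_+ + \gamma v_-),
\]
where the right-hand equality uses $\bar\gamma = \gamma$ again. Combined with $\{C,D\}=0$ on $\mathcal{D}(D)$, this finishes the proof.
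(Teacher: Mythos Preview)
Your proof is correct and follows essentially the same approach as the paper's: both hinge on the identities $Cv_+ = iv_+$ and $Cv_- = iv_-$, from which the domain invariance reduces to $\bar\gamma=\gamma$, and the anticommutation on the extra direction is then a direct check. Your write-up is somewhat more detailed (you spell out why $C$ preserves $\mathcal{D}(D)$ and give the intermediate computations $\sigma_1\overline{u_0}=iu_0$, $\sigma_1\overline{u_{-1}}=-iu_{-1}$), but the argument is the same.
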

		\begin{proof} Let $u\in \mathcal{D}(D^\gamma)$. Thanks to Theorem \ref{thm:sa}\eqref{thm:ncc}, there exist $v\in \mathcal{D}(D)$ and $c_0\in\C$ such that $u = v + c_0(v_+ + \gamma v_-)$. The following equalities hold:
		\[
			C v_+ = i v_+,\quad C v_- = i v_-.
		\]
		As $C u = C v + i\overline{c_0}(v_+ + \overline{\gamma}v_-)$ we have $Cu \in \mathcal{D}(D^\gamma)$ if and only if $\gamma\in\mathbb{R}$, that is $\gamma = \pm 1$. Now, let $\gamma=\pm1$, we have
		\[
			D^\gamma Cu = D Cv - \overline{c_0}(v_+ - \gamma v_-) = -C D v - \overline{c_0}(v_+ - \gamma v_-).
		\]
		As $D^\gamma(v_+ + \gamma v_-) = i(v_+ -\gamma v_-)$, we get $CD^\gamma(v_+ + \gamma v_-) = (v_+ - \gamma v_-)$ which yields $D^\gamma Cu = -CD^\gamma u$.
		\end{proof}
		\subsection{Proof of Proposition \ref{prop:scaling}}
		Proposition \ref{prop:scaling} is a consequence of the following lemma.
		\begin{lemma}\label{lem:scal}
			Let  $\alpha>0$ and $\gamma=e^{is}\in \C$  for $s\in [0,2\pi)$. The unitary map
			\begin{equation}\label{eqn:transunitmap}
				\begin{array}{lll}
					\mathcal{V}_\alpha : &L^2(\Omega_\omega,\C^2)&\longrightarrow L^2(\Omega_\omega,\C^2)
					\\
					&u&\longmapsto \alpha u(\alpha\cdot)
				\end{array}
			\end{equation}
			satisfies 
			\[
				\begin{split}
					&\mathcal{V}_\alpha^{-1}(-i\sigma\cdot \nabla)\mathcal{V}_\alpha  = \alpha (-i\sigma\cdot \nabla),
					\\
					&\mathcal{V}_\alpha\mathcal{D}(D^\gamma) = 
					\begin{cases}
						\mathcal{D}(D^\gamma) & \mbox{ if }\gamma = \pm 1 (\text{ i.e. if } s\in\{0,\pi\}),\\
						\mathcal{D}(D^{\tilde \gamma}) & \mbox{ otherwise},
					\end{cases}
				\end{split}
			\]
			where $\tilde \gamma = e^{2i\arctan\left(\frac{\tan(s/2)}{\alpha^{\lambda_0}}\right)}$.
		\end{lemma}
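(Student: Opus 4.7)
The first identity is a direct chain rule computation: $\nabla[u(\alpha\,\cdot)](x) = \alpha(\nabla u)(\alpha x)$ gives $(-i\sigma\cdot\nabla)\mathcal{V}_\alpha u = \alpha\mathcal{V}_\alpha(-i\sigma\cdot\nabla u)$, while the unitarity of $\mathcal{V}_\alpha$ follows from the change of variable $y = \alpha x$ together with the scale invariance of $\Omega_\omega$. Moreover $\mathcal{V}_\alpha$ preserves $\mathcal{D}(D)$: it maps $H^1(\Omega_\omega,\C^2)$ into itself, and the pointwise boundary condition $\mathcal{B}_\n u = u$ is preserved because $\n$ is constant along each ray from the origin. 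By Theorem \ref{thm:sa}\eqref{thm:ncc}, it then suffices to compute $\mathcal{V}_\alpha(v_+ + \gamma v_-)$ modulo $\mathcal{D}(D)$ and to express it as a nonzero scalar multiple of $v_+ + \tilde\gamma v_-$.

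Set $\mu := |\nu_0| \in (0,1/2)$. Using the asymptotics in Remark \ref{rem:besselasym}, decompose $K_{\nu_0}(r) = c_0 r^{-\mu}\chi(r) + h_0(r)$ and $K_{\nu_0+1}(r) = c_1 r^{-(1-\mu)}\chi(r) + h_1(r)$ with $c_0, c_1 > 0$ and with $h_0(r) u_0(\theta), h_1(r) u_{-1}(\theta) \in \mathcal{D}(D)$ (the angular factors $u_0, u_{-1}\in\mathcal{D}(K)$ carry the boundary condition, and the remainders are well-behaved both near $0$ and at infinity). Substitute into $v_\pm$ and apply $\mathcal{V}_\alpha$, exploiting $\alpha(\alpha r)^{-\mu} = \alpha^{1-\mu}r^{-\mu}$ and $\alpha(\alpha r)^{-(1-\mu)} = \alpha^\mu r^{-(1-\mu)}$; the cutoff difference $\chi(\alpha r) - \chi(r)$ is smooth and supported in an annulus bounded away from $0$, so its product with $r^{-\mu}u_0(\theta)$ or $r^{-(1-\mu)}u_{-1}(\theta)$ lies in $\mathcal{D}(D)$. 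One obtains
\[
	\mathcal{V}_\alpha(v_+ + \gamma v_-) \equiv c_0(1+\gamma)\alpha^{1-\mu} r^{-\mu}\chi(r) u_0(\theta) - i c_1 (1-\gamma)\alpha^{\mu} r^{-(1-\mu)}\chi(r) u_{-1}(\theta) \pmod{\mathcal{D}(D)}.
\]
Requiring this to be proportional (mod $\mathcal{D}(D)$) to $v_+ + \tilde\gamma v_- \equiv c_0(1+\tilde\gamma) r^{-\mu}\chi(r) u_0(\theta) - i c_1 (1-\tilde\gamma) r^{-(1-\mu)}\chi(r) u_{-1}(\theta)$ and then dividing the two resulting proportionality relations, while using $1 - 2\mu = \pi/(2\omega) = \lambda_0$, gives
\[
	\frac{1+\tilde\gamma}{1-\tilde\gamma} = \alpha^{\lambda_0}\,\frac{1+\gamma}{1-\gamma}.
\]
With $\gamma = e^{is}$ and the identity $(1+e^{is})/(1-e^{is}) = i\cot(s/2)$, this rearranges to $\tan(\tilde s/2) = \tan(s/2)/\alpha^{\lambda_0}$, i.e.\ $\tilde\gamma = e^{2i\arctan(\tan(s/2)/\alpha^{\lambda_0})}$. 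The degenerate values $s\in\{0,\pi\}$ (i.e.\ $\gamma = \pm 1$) produce $\tilde\gamma = \gamma$, giving the stated invariance of $\mathcal{D}(D^{\pm 1})$.

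The main delicate point is the justification of the two modular reductions: (i) the replacement of the Bessel functions by their leading singular profiles via Remark \ref{rem:besselasym}, and (ii) the claim that $[\chi(\alpha r) - \chi(r)]\, r^{-\mu} u_0(\theta)$ and its $(1-\mu)$-counterpart belong to $\mathcal{D}(D)$. Both reduce to checking $H^1$-regularity (immediate on an annulus for (ii), and following from the asymptotic remainder estimates for (i)) together with the invariance of the boundary condition on $\pa\Omega_\omega$, which is inherited from $u_0, u_{-1}\in\mathcal{D}(K)$.
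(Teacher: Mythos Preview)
Your proof is correct and follows essentially the same route as the paper's: both arguments reduce the question to the effect of dilation on the singular span $\Span(v_+ + \gamma v_-)$ modulo $\mathcal{D}(D)$, replace the Bessel functions by their leading power profiles via Remark~\ref{rem:besselasym}, and read off $\tilde\gamma$ from the ratio of the two scaling factors (the paper phrases this through $-i\frac{1-\gamma}{1+\gamma}=-\tan(s/2)$, which is the reciprocal of your $(1+\gamma)/(1-\gamma)=i\cot(s/2)$). Your version is simply more explicit about the cutoff and remainder bookkeeping that the paper leaves implicit.
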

		\begin{proof}
			Let $\alpha>0$ and $\gamma = e^{is}\in \C$. As $\mathcal{D}(\mathsf{d}_\omega^\kappa)$ is scale-invariant for all $\kappa\geq 1$, using Lemma \ref{lem:ssev} we are reduced to investigate the operator $\mathsf{d}_\omega^{0,\gamma}$. 
			Thanks to Remark \ref{rem:besselasym} we have 
			\[
				\mathcal{V}_\alpha\mathcal{D}(D^{\pm 1}) = 
						\mathcal{D}(D^{\pm 1})
			\]
			and for $\gamma\ne -1$ and $r>0$,
			\[
				(1+\gamma)\begin{pmatrix}
				K_{\frac{1-\lambda_0}{2}}(\alpha r)
				\\ -i\frac{1-\gamma}{1+\gamma} K_{\frac{1+\lambda_0}{2}}(\alpha r)
			\end{pmatrix}
			\sim_{r\rightarrow 0}
			\frac{1+\gamma}{\alpha^{\frac{1-\lambda_0}{2}}}
			\begin{pmatrix}
				K_{\frac{1-\lambda_0}{2}}( r)
				\\ -i\alpha^{-\lambda_0}\frac{1-\gamma}{1+\gamma} K_{\frac{1+\lambda_0}{2}}( r)
			\end{pmatrix}
			.
			\]
			We have
			\[
				-i\frac{1-\gamma}{1+\gamma}
				%-i\frac{}{} %= -i\frac{1-a-ib}{1+a+ib} = -i\frac{(1-a-ib)(1+a-ib)}{1+2a+a^2+b^2}   = -i\frac{-2ib}{2(1+a)} 
				%= -\frac{b}{1+a}
				=-\tan(s/2)
			\]
			which rewrites
			\[
				-i\alpha^{-\lambda_0}\frac{1-\gamma}{1+\gamma} = - \alpha^{-\lambda_0}\tan(s/2)=-i\frac{1-\tilde \gamma}{1+\tilde \gamma}
			\]
			for $\tilde \gamma = e^{2i\arctan\left(\frac{\tan(s/2)}{\alpha^{\lambda_0}}\right)}$. This ensures that $\mathcal{V}_\alpha\mathcal{D}(D^\gamma) = \mathcal{D}(D^{\tilde \gamma})$ and the result follows.
		\end{proof}
		\subsection{Proof of Proposition \ref{prop:dist_ext}}
		To prove Proposition \ref{prop:dist_ext} it is enough to prove the following lemma.
		\begin{lemma}\label{lem:sobinje} Let $\omega\in(\pi/2,\pi)$ and let $\nu_0$ be as defined in Theorem \ref{thm:sa}. The following holds true.
			\begin{enumerate}[(i)]
				\item\label{spaceok} The function 
					\[(r\cos(\theta),r\sin(\theta))\in\Omega_\omega \mapsto K_{\nu_0}(r)u_0(\theta)\]
					 belongs to $H^{1/2}(\Omega_\omega)$.
				\item\label{spacenotok} The function
					\[(r\cos(\theta),r\sin(\theta))\in\Omega_\omega \mapsto K_{\nu_0+1}(r)u_{-1}(\theta)\]
					does not belong to $H^{1/2}(\Omega_\omega)$.
			\end{enumerate}
		\end{lemma}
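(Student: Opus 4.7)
The plan is to first reduce both items to model singular functions via Remark \ref{rem:besselasym}, then handle the two regimes $|\nu_0|<1/2$ and $1-|\nu_0|>1/2$ with different tools. Concretely, write $K_{\nu_0}(r) = c_0 \chi(r) r^{-|\nu_0|} + R_0(r)$ and $K_{\nu_0+1}(r) = c_1 \chi(r) r^{-(1-|\nu_0|)} + R_1(r)$, where $\chi \in C_c^\infty([0,\infty))$ equals $1$ near $0$, the $c_j$ are the nonzero leading coefficients of the Bessel expansion at the origin, and $R_0,R_1$ are smooth on $[0,\infty)$ with exponential decay at infinity. A direct polar-coordinate computation shows that $R_0 u_0$ and $R_1 u_{-1}$ both belong to $H^1(\Omega_\omega) \subset H^{1/2}(\Omega_\omega)$, so it suffices to handle
\[
g_1(r,\theta) := \chi(r) r^{-|\nu_0|} u_0(\theta), \qquad g_2(r,\theta) := \chi(r) r^{-(1-|\nu_0|)} u_{-1}(\theta).
\]
The key arithmetic fact is that $\omega\in(\pi/2,\pi)$ forces $|\nu_0|\in(0,1/4)$, so $|\nu_0|<1/2<1-|\nu_0|$ and $1/2$ separates the admissible from the non-admissible singular exponent.

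For part \eqref{spaceok}, I extend $u_0$ from $[-\omega,\omega]$ to a smooth $\C^2$-valued function $\widetilde u_0$ on the circle $S^1$ and set $\widetilde g_1(x) := \chi(|x|)|x|^{-|\nu_0|}\widetilde u_0(x/|x|)$ on $\R^2$. Since $g_1 = \widetilde g_1|_{\Omega_\omega}$, it is enough to show $\widetilde g_1 \in H^{1/2}(\R^2)$. Expanding $\widetilde u_0$ in Fourier series on $S^1$ and using the standard polar decomposition of the Fourier transform on $\R^2$, each angular mode $h_n(x) := \chi(|x|)|x|^{-|\nu_0|}e^{in\theta}$ has Fourier transform $\widehat{h_n}(\rho e^{i\phi}) = 2\pi (-i)^n e^{in\phi} H_n(\rho)$, where $H_n(\rho) := \int_0^\infty \chi(r) r^{1-|\nu_0|} J_n(r\rho)\, dr$. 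Standard asymptotic analysis of such Bessel integrals (via the large-argument expansion of $J_n$ and the behaviour at $r=0$) gives $H_n(\rho) = O(\rho^{|\nu_0|-2})$ as $\rho\to\infty$. Hence $|\widehat{h_n}(\xi)|^2 = O(|\xi|^{2|\nu_0|-4})$, which is integrable against $(1+|\xi|^2)^{1/2}\, d\xi$ because $|\nu_0|<1/2$. The rapid decay of the Fourier coefficients of $\widetilde u_0$ (from its smoothness) absorbs the $n$-dependence of the constants and allows the sum over $n$ to converge, giving $\widetilde g_1 \in H^{1/2}(\R^2)$ and then $g_1 \in H^{1/2}(\Omega_\omega)$ by restriction.

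For part \eqref{spacenotok}, I estimate the Slobodeckij seminorm of $g_2$ from below on the subregion
\[
\mathcal{R} := \{(x,y) \in \Omega_\omega \times \Omega_\omega : r_x \in (0,\varepsilon),\ r_y \in (r_x, 2 r_x),\ \theta_x, \theta_y \in (-\omega/4, \omega/4)\},
\]
exploiting that $|u_{-1}(\theta)|^2 = 1/(2\omega)$ is constant and $|x-y|^2 \approx (r_y - r_x)^2 + r_x r_y (\theta_x - \theta_y)^2$ on $\mathcal R$. Performing the angular integration via the substitution $u := \sqrt{r_x r_y}(\theta_x - \theta_y)/(r_y - r_x)$ yields $\iint d\theta_x\, d\theta_y / |x-y|^3 \geq c_1 / ((r_y - r_x)^2 r_x)$, while the mean value theorem combined with the asymptotic $|\dot K_{\nu_0+1}(r)| \geq c_2\, r^{-(2-|\nu_0|)}$ near $r=0$ gives $|g_2(x) - g_2(y)|^2 \geq c_3 r_x^{2|\nu_0| - 4} (r_y - r_x)^2$ on $\mathcal R$. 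Assembling the factors and integrating $r_y$ over $(r_x, 2r_x)$ produces a lower bound proportional to $\int_0^\varepsilon r_x^{2|\nu_0| - 2}\, dr_x$, which diverges because $2|\nu_0| - 2 < -1$. Hence $g_2 \notin H^{1/2}(\Omega_\omega)$.

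The main obstacle lies in the two quantitative estimates of part \eqref{spacenotok}: one must check that on $\mathcal R$ the inequality $\sqrt{r_x r_y}/(r_y - r_x) \geq 1$ holds (so the substituted integral is not killed by the angular truncation $|\theta_x - \theta_y| < \omega/2$), and one must derive a clean pointwise lower bound for $|K_{\nu_0+1}(r_x) - K_{\nu_0+1}(r_y)|$ on the dyadic annulus $r_y \in (r_x, 2 r_x)$. For part \eqref{spaceok}, the subtlety lies in uniformly controlling the Bessel integral $H_n(\rho)$ as $n$ varies, but the smoothness of the extended angular profile $\widetilde u_0$ ensures that this is only a bookkeeping issue.
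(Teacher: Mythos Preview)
Your proposal is essentially correct, but it takes a markedly more laborious route than the paper. The paper bypasses all Fourier analysis and Slobodeckij computations by using two one-line Sobolev embeddings in dimension two. For \eqref{spacenotok}, it invokes $H^{1/2}(\Omega_\omega)\hookrightarrow L^4(\Omega_\omega)$ and simply checks that $|K_{\nu_0+1}(r)u_{-1}(\theta)|^4 r\sim C r^{-(4(\nu_0+1)-1)}$ is not integrable near $0$ because $\nu_0>-1/2$; this is a three-line argument. For \eqref{spaceok}, it computes $|\nabla(K_{\nu_0}u_0)|^2$ in polar coordinates, observes that $K_{\nu_0}u_0\in W^{1,p}(\Omega_\omega)$ for all $p<2/(|\nu_0|+1)$, and then uses $W^{1,4/3}(\Omega_\omega)\hookrightarrow H^{1/2}(\Omega_\omega)$ together with $2/(|\nu_0|+1)\geq 8/5>4/3$. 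Both parts reduce to checking a single exponent inequality.

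Your argument for \eqref{spacenotok} does work; the lower bound $|g_2(x)-g_2(y)|^2\geq \tfrac{1}{2\omega}(f(r_x)-f(r_y))^2$ follows from $|u_{-1}|\equiv 1/\sqrt{2\omega}$ and Cauchy--Schwarz (for unit vectors $v,w$ and positive reals $a,b$ one has $|av-bw|^2\geq(a-b)^2$), so the angular variation cannot spoil the radial mean-value estimate. Your argument for \eqref{spaceok} is correct in outline, but two points deserve a caveat: the remainder $R_0(r)=K_{\nu_0}(r)-c_0\chi(r)r^{-|\nu_0|}$ is \emph{not} smooth on $[0,\infty)$ (the subleading term of $K_{|\nu_0|}$ is of order $r^{|\nu_0|}$), though it is still regular enough that $R_0u_0\in H^1(\Omega_\omega)$; and the uniform-in-$n$ control of the oscillatory Bessel integrals $H_n(\rho)$, while genuinely routine, is considerably more work than the paper's direct $W^{1,p}$ bound. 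The upshot is that the paper's embeddings give a much shorter proof, while your direct approach has the merit of making the threshold exponent $1/2$ for the singularity visibly emerge from the Gagliardo seminorm.
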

		\begin{proof} 
		Using \cite[Cor. 4.53.]{MR2440993}, we have $H^{1/2}(\Omega_\omega)\hookrightarrow L^4(\Omega_\omega)$ and thanks to Remark \ref{rem:besselasym} we get
		\[
			\left|
				K_{\nu_0+1}(r)u_{-1}(\theta)
			\right|^4r= \frac{1}{2^3\omega^2}
			|K_{\nu_0+1}(r)|^4r \sim_{r\to0} \frac{C}{r^{4(\nu_0+1)-1}}
		\]
		for all $r>0$ and $\theta\in(-\omega,\omega)$.
		Since
		\[
			\nu_0 = \frac{\pi-2\omega}{4\omega}>-1/2,
		\]
		 this function does not belong to $L^4(\Omega_\omega)$ 
		and Lemma \ref{lem:sobinje}\eqref{spacenotok} is proved.
		Let us prove Lemma \ref{lem:sobinje}\eqref{spaceok}. Let $r>0$ and $\theta\in(-\omega,\omega)$, we have
		\[
			|\nabla K_{\nu_0}u_0|^2(r,\theta) = \frac{1}{4\omega}|\pa_r K_{\nu_0}(r)|^2 + \frac{|K_{\nu_0}(r)|^2}{4\omega r^2}2|\nu_0|^2.
		\]
		Thanks to \eqref{eq:difBess} and Remark \ref{rem:besselasym}, $K_{\nu_0}u_0$ belongs to $W^{1,p}(\Omega_\omega)$ as soon as 
		\[
			%p(\nu_0+1)-1<1
			1\leq p<\frac{2}{|\nu_0|+1}.
		\]
		Since we have
		\[
			\min_{\omega\in(\pi/2,\pi)}\,\frac{2}{|\nu_0|+1} = \frac{8}{5}>\frac{4}{3}
		\]
		and $W^{1,4/3}(\Omega_\omega)\hookrightarrow H^{1/2}(\Omega_\omega)$, we get Lemma \ref{lem:sobinje}\eqref{spaceok}.
		\end{proof}
		\section{Spectrum of $D^{sa}$}\label{sec:spectrum}
		This section is devoted to the proofs of Propositions \ref{prop:spec} and \ref{prop:pspec}.
		\subsection{Proof of Proposition \ref{prop:spec}}
		The proof is divided into three steps. In Steps 1 and 2 we construct Weyl sequences for the Dirac operator with infinite mass boundary conditions on $\Omega_{\pi/2}$ and the free Dirac operator in $\mathbb{R}^2$ denoted by $D_0$, respectively. For a general $\omega\in(0,\pi)$, Weyl sequences for the Dirac operator with infinite mass boundary conditions in $\Omega_{\omega}$ can be obained using adequate cut-off functions. The last step ensures that the Weyl sequences actually yield the whole essential spectrum.

		\emph{Step 1: Weyl sequences for $m\geq 0$ on $\R^2$.}
		Let $\chi : [0,+\infty)\mapsto [0,1]$ be a $C^\infty$-smooth function such that
		\begin{equation}\label{eqn:defWeylchi}
			\chi(x) = \begin{cases}
				1&\mbox{ if }x<1\\
				0&\mbox{ if }x>2.
			\end{cases}
		\end{equation}
		Let $\lambda>m$ and define
		 \[
			u:\left\{\begin{array}{lcl}
			\R^2&\longrightarrow&\C^2\\
			(x_1,x_2)&\longmapsto& \begin{pmatrix}
				\sqrt{\frac{\lambda+m}{\lambda-m}}\\1
			\end{pmatrix}
			e^{ix_1\sqrt{\lambda^2-m^2}}.
			\end{array}\right.
%			\begin{pmatrix}
%				\sqrt{\frac{\lambda+m}{\lambda-m}}\\\sqrt{\frac{\lambda-m}{\lambda+m}}
%			\end{pmatrix}
		\]
		In particular, remark that
		$
			(-i\sigma\cdot\nabla+m\sigma_3-\lambda)u = 0.
		$
		Now, for $n>0$, we define the sequence of functions
		\[
		u_n(x_1,x_2) = u(x_1,u_2)\chi(|x_1|/n)\chi(|x_2|/n) \in H^1(\R^2,\C^2).
		\]
		We get
		\[
			\|u_n\|_{L^2(\R^2,\C^2)}^2 = 2 \Big(\frac{\lambda}{\lambda - m}\Big) n^2 \|\chi\|_{L^2(\R)}^4
		\]
		and
		\begin{align*}
			&\|(-i\sigma\cdot\nabla +m\sigma_3 - \lambda)u_n\|_{L^2(\R^2,\C^2)}^2\\
			 &\leq \frac{4}{n^2} \Big(\frac{\lambda}{\lambda - m}\Big)\int_{\R^2} \big(\chi'(|x_1|/n)\chi(|x_2|/n)\big)^2 dx_1dx_2
			\\ &= 4\Big(\frac{\lambda}{\lambda - m}\Big)\int_{\R^2} \big(\chi'(|x_1|)\chi(|x_2|)\big)^2 dx_1dx_2.
		\end{align*}
		Thus, we obtain
		\[
			\frac{\|(-i\sigma\cdot\nabla +m\sigma_3 - \lambda)u_n\|_{L^2(\R^2,\C^2)}^2}{\|u_n\|_{L^2(\R^2,\C^2)}^2} \rightarrow 0,\quad\text{when } n\rightarrow+\infty.
		\]
		In particular, $\lambda$ belongs to the spectrum of the free Dirac operator and thus $(m,+\infty)\subset Sp(D_0)$. As the spectrum of a self-adjoint operator is closed, the end-point $m$ also belongs to the spectrum. Recall that $C$ is the charge conjugation operator introduced in \eqref{eqn:def_chargeconj}. The same reasoning yield that the sequence $(Cu_n)_{n>0}$ is also a Weyl sequence but for the value $-\lambda$. In particular we obtain
		\[
			(-\infty,-m]\cup[m,+\infty) \subset Sp(D_0).
		\]
		As the set on the left-hand side is not discrete, actually we have proved that
		\[
			(-\infty,-m]\cup[m,+\infty) \subset Sp_{ess}(D_0).
		\]
		Finally, localizing the sequences $(u_n)_{n>0}$ and $(Cu_n)_{n>0}$ inside $\Omega_\omega$ and away from the boundary (with well chosen cut-off functions), we obtain
		\[
					(-\infty,-m]\cup[m,+\infty) \subset Sp_{ess}(D^{sa}).
		\]

		\emph{Step 2: Weyl sequences for $m<0$ on $\Omega_{\pi/2}$.} Let $\lambda\in \R$ and define
		\[
			u:\left\{\begin{array}{lcl}
			\Omega_{\pi/2}&\longrightarrow&\C^2\\
			(x_1,x_2)&\longmapsto&\begin{pmatrix}
				1\\-i
			\end{pmatrix}
			e^{mx_1-i\lambda x_2}.
			\end{array}\right.
		\]
		In particular, remark that we have $(-i\sigma\cdot\nabla +m\sigma_3 - \lambda)u = 0$ and $\mathcal{B}_{-e_1}u = u$ on $\partial \Omega_{\pi/2}$ where $e_1=(1,0)^T$. We define the sequence of functions $(u_n)_{n>0}$ by
		\[
			u_n(x_1,x_2) = u(x_1,x_2)\chi(|x_2|/n),\quad \text{ for } (x_1,x_2) \in \Omega_{\pi/2}
		\]
		and with $\chi$ defined in \eqref{eqn:defWeylchi}. Note that as constructed, $u_n \in \mathcal{D}(D^{sa})$. We get
		\begin{align*}
			\|u_n\|^2_{L^2(\Omega_{\pi/2},\C^2)} &= 2\left(\int_0^{+\infty}e^{2mx_1}dx_1\right)\left(\int_\R |\chi(|x_2|/n)|^2dx_2\right)\\
			& = \frac{n}{|m|}\int_\R |\chi(|x_2|)|^2dx_2
		\end{align*}
		and
		\begin{align*}
			\|(D^{sa}-\lambda)u_n\|^2_{L^2(\Omega_{\pi/2},\C^2)} 
			&= \frac{2}{n^2}\||\pa_2 \chi(|x_2|/n)|e^{mx_1}\|^2_{L^2(\Omega_{\pi/2},\C^2)} 		\\
			&	=  \frac{2}{|m|n}\int_\R |\pa_2\chi(|x_2|)|^2dx_2.
		\end{align*}
		This proves that $(u_n)_{n>0}$ is a Weyl sequence and $\lambda$ belongs to the spectrum of $D^{sa}$. We get
$\R \subset Sp(D^{sa})$ which actually read
		\[
			Sp(D^{sa}) = Sp_{ess}(D^{sa}) = \R.
		\]
This proof can be adapted to the domain $\Omega_\omega$ with $\omega\in(0,\pi)$ and negative mass using Remark \ref{rem:turn} and adequate cut-off functions.

		\emph{Step 3: Reverse inclusion for the essential spectrum.} The only thing left to investigate is the case $m>0$.
		 Thanks to Remark \ref{rem:regudom}, we have $\mathcal{D}(D^{sa})\subset H^{3/4-\eps}(\Omega_\omega,\C^2)$ for all $\eps\in(0,1/4)$. Hence, for all $u\in\mathcal{D}(D^{sa})$, an integration by parts yields
		 \[
		 	\begin{split}
				2\Re\braket{-i\sigma\cdot \nabla u,\sigma_3u}_{L^2(\Omega_\omega,\C^2)}% + \braket{\sigma_3u,-i\sigma\cdot \nabla u}_{L^2} 
				%
				%&
				%= \sum_{k=1}^2\Re\int_{\Omega_\omega}\pa_k\braket{-i\sigma_k u,\sigma_3u}_{L^2}
				%
				%\\
				%&
				%= \Re\int_{\pa\Omega_\omega}\braket{-i\sigma\cdot \n u,\sigma_3u}_{L^2}
				%
				%\\
				%&
				= \|u\|^2_{L^2(\pa\Omega_\omega,\C^2)}.
			\end{split}
		 \]
		 It gives
		 \[
		 	\begin{split}
		 	\|D^{sa}u\|^2_{L^2(\Omega_\omega,\C^2)} 
			&
			= \|\sigma\cdot \nabla u\|^2_{L^2(\Omega_\omega,\C^2)} + m^2\|u\|^2_{L^2(\Omega_\omega,\C^2)} + m\|u\|^2_{L^2(\pa\Omega_\omega,\C^2)}.
			\end{split}
		 \]
		 This ensures that whenever $m\geq 0$, the spectrum of $(D^{sa})^2$ is included in $[m^2,+\infty)$ and in particular, we have
		 \[
		 	\Sp(D^{sa})\subset (-\infty,-m]\cup[m,+\infty).
		 \]
		It concludes the proof of Proposition \ref{prop:spec}.
		 \subsection{Proof of Proposition \ref{prop:pspec}}\label{sec:virial}
		Assume that $\lambda \in Sp(D^{sa})$ is an eigenvalue. Let $v$ be a normalized eigenfunction associated with $\lambda$. For $\alpha>0$, recall that $\mathcal{V}_\alpha$ is the unitary map introduced in \eqref{eqn:transunitmap}.
Thanks to Proposition \ref{prop:scaling}, $\mathcal{V}_\alpha\big(\mathcal{D}(D^{sa})\big) \subset \mathcal{D}(D^{sa})$. Moreover, we have
		\[\begin{split}
			0 
			&= \braket{(D^{sa} - \lambda)v, \mathcal{V}_\alpha v}_{L^2(\Omega_\omega,\C^2)} 
			= \braket{v,(D^{sa} - \lambda)\mathcal{V}_\alpha v}_{L^2(\Omega_\omega,\C^2)}
			\\
			&= \braket{v,\mathcal{V}_\alpha(\alpha (D^{sa}-\lambda) +(\alpha-1)(\lambda-m\sigma_3))v}_{L^2(\Omega_\omega,\C^2)}
			\\
			&= (\alpha-1)\braket{v,\mathcal{V}_\alpha(\lambda-m\sigma_3)v}_{L^2(\Omega_\omega,\C^2)}.
		\end{split}\]
		For $\alpha\neq 1$ we obtain
		\[
			0 = \braket{(\lambda 1_2 -m\sigma_3)v, \mathcal{V}_\alpha v}_{L^2(\Omega_\omega,\C^2)}.
		\]
		Since $\Big(\alpha\in(0,+\infty) \mapsto \mathcal{V}_\alpha v \in L^2(\Omega_\omega,\C^2)\Big)$ is continuous, taking the limit $\alpha\rightarrow 1$ gives
		\begin{equation}\label{eqn:vir1}
			\lambda\|v\|^2_{L^2(\Omega_\omega,\C^2)} = m\braket{\sigma_3v, v}_{L^2(\Omega_\omega,\C^2)}
		\end{equation}
		which yields
		\[
			|\lambda|\|v\|^2_{L^2(\Omega_\omega,\C^2)} = |m| \big|\braket{\sigma_3v, v}_{L^2(\Omega_\omega,\C^2)}\big|\leq |m|\|v\|^2_{L^2(\Omega_\omega,\C^2)}
		\]
		and necessarily $|\lambda|\leq |m|$. Assume by contradiction that $\lambda = m$. Writing $v = (v_1,v_2)$, \eqref{eqn:vir1} implies that $v_2$ vanishes on $\Omega_\omega$. Taking the infinite mass boundary conditions into account, we get $v_1|_{\partial\Omega_\omega} = 0$ which implies that $v_1 \in H_0^1(\Omega_\omega,\C)$. Moreover, since $(v_1, 0)$ is an eigenfunction of $D^{sa}$, $v_1$ satisfies the Cauchy-Riemann equation
		\[
			-i(\pa_1+i\pa_2)v_1 = 0 \mbox{ in }\Omega_\omega.
		\]
		An integration by parts gives $\|\nabla v_1\|_{L^2(\Omega_\omega, \C^2)} = 0$ and $v_1$ has to vanish. This gives us the wanted contradiction and the case $\lambda = -m$ can be done similarly. It ends the proof of Proposition \ref{prop:pspec}

		\appendix
		\section{A result on radial functions}
		\begin{lemma}\label{lem:radfun}
			Let $a\in L^2((0,+\infty),rdr)$ be a function such that $\dot{a}, \frac{a}{r}\in\break L^2((0,+\infty),rdr)$. We have
			\[
				\RE\int_{0}^\infty \left(\overline{\dot{a}(r)}\frac{a(r)}{r}\right)rdr = 0.
			\]
		\end{lemma}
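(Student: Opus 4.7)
The plan is to exploit the identity $\frac{d}{dr}|a(r)|^2 = 2\RE\bigl(\overline{\dot a(r)}\, a(r)\bigr)$, so that the integral to compute becomes (formally) the boundary values $\bigl[|a|^2\bigr]_0^\infty$, and then to show that both boundary values vanish thanks to the three integrability assumptions.

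First, I would observe that, by Cauchy--Schwarz,
\[
\int_0^\infty \bigl|\overline{\dot a(r)}\, a(r)\bigr|\,dr \leq \left(\int_0^\infty |\dot a(r)|^2 r\,dr\right)^{1/2}\left(\int_0^\infty \frac{|a(r)|^2}{r}\,dr\right)^{1/2}<+\infty,
\]
so that $\overline{\dot a}\, a \in L^1((0,+\infty))$. In particular, the function $f:r\mapsto |a(r)|^2$ belongs to $W^{1,1}((0,+\infty))$ with $f' = 2\RE(\overline{\dot a}\,a)$, hence is absolutely continuous on $(0,+\infty)$ and admits finite limits $\ell_0 := \lim_{r\to 0^+} f(r)$ and $\ell_\infty := \lim_{r\to +\infty} f(r)$.

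Next, I would rule out nonzero boundary values. Since $a\in L^2((0,+\infty), r\,dr)$, the function $r\mapsto |a(r)|^2 r$ is integrable near infinity; but if $\ell_\infty > 0$ this integrand would behave like a positive multiple of $r$ for large $r$, contradicting integrability. Hence $\ell_\infty = 0$. Similarly, the assumption $a/r\in L^2((0,+\infty), r\,dr)$ means $|a(r)|^2/r$ is integrable near $0$; if $\ell_0 > 0$ this would blow up like a positive multiple of $1/r$ near $0$, again contradicting integrability. Hence $\ell_0 = 0$.

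Finally, integrating the identity $f' = 2\RE(\overline{\dot a}\,a)$ over $(0,+\infty)$ gives
\[
2\RE\int_0^\infty \overline{\dot a(r)}\,\frac{a(r)}{r}\, r\,dr = \ell_\infty - \ell_0 = 0,
\]
which is the claim. The only mildly delicate step is justifying the existence of the pointwise limits of $|a|^2$ at the endpoints; this is handled by the $W^{1,1}$-regularity established via the Cauchy--Schwarz bound above, rather than by any pointwise control on $a$ itself.
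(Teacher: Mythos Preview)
Your argument is correct and follows essentially the same route as the paper: both exploit the identity $\frac{d}{dr}|a|^2 = 2\RE(\overline{\dot a}\,a)$ and then use the integrability hypotheses to show the boundary contributions vanish, the paper invoking $r^{1/2}a\in H^1(0,+\infty)$ for continuity while you obtain the endpoint limits directly from $f'\in L^1$ via Cauchy--Schwarz. One small remark: the claim $f\in W^{1,1}((0,+\infty))$ also requires $f=|a|^2\in L^1$, which you did not verify; what you actually need and have is local absolute continuity of $f$ (from $a\in H^1_{\mathrm{loc}}$) together with $f'\in L^1$, and this already suffices for the endpoint limits to exist.
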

		\begin{proof}
			 The function $r^{1/2}a$ belongs to $H^1(0,+\infty)$ thus, in particular, $a\in C^0(0,\infty)$. 	
			For $r_0>0$, we have
			\[
				\RE\int_{r_0}^\infty \left(\overline{\dot{a}(r)}\frac{a(r)}{r}\right)rdr = \int_{r_0}^\infty \frac{d}{dr}|a|^2(r)dr = -|a|^2(r_0).
			\]
			Hence $|a|^2$ has a finite limit at $0$. Since $a/r\in L^2((0,+\infty),rdr)$, we get $|a|^2(0) = 0$ which yields
			\[
				\RE\int_{0}^\infty \left(\overline{\dot{a}(r)}\frac{a(r)}{r}\right)rdr = 0.
			\]

		\end{proof}

\section*{Acknowledgments}
 L. L.T. wishes to thank J. Olivier for various discussions on special functions and for pointing out the related references cited in this paper.
\\
 T. O.-B. wishes to thank K. Pankrashkin for useful discussions and remarks about self-adjoint extensions of symmetric operators.
\subsection*{Fundings}
T. O.-B. is supported by a public grant as part of the {\it Investissement d'avenir} project, reference ANR-11-LABX-0056-LMH, LabEx LMH.
\\
L.~L.T. is supported by ANR DYRAQ ANR-17-CE40-0016-01.
	
	\bibliographystyle{abbrv}

\begin{thebibliography}{}

\end{thebibliography}


\begin{thebibliography}{10}

\bibitem{abramowitz1964handbook}
Abramowitz, M.,  and  Stegun, I.~A.: 
\newblock {\em Handbook of mathematical functions}. Volume~55.
\newblock Courier Corporation, 1964.

\bibitem{PhysRevB.77.085423}
 Akhmerov, A.~R.,  and  Beenakker, C.~W.~J.: 
\newblock Boundary conditions for Dirac fermions on a terminated honeycomb
  lattice.
\newblock {Phys. Rev. B} {\bf 77}, 085423 (2008).

\bibitem{PhysRevB.77.193410}
Akola, J.,  Heiskanen, H.~P., and Manninen, M.: 
\newblock Edge-dependent selection rules in magic triangular graphene flakes.
\newblock {Phys. Rev. B} {\bf 77}, 193410 (2008).

\bibitem{arrizabalaga:hal-01540149}
Arrizabalaga, N., Le~Treust, L., and Raymond, N.: 
\newblock {Extension operator for the MIT bag model}.
\newblock To appear in Ann. Fac. Sci. Toulouse Math.,  (2017).

\bibitem{MR3663620}
Arrizabalaga, N., Le~Treust, L., and Raymond, N.: 
\newblock On the {MIT} {B}ag {M}odel in the {N}on-relativistic {L}imit.
\newblock {Comm. Math. Phys.} {\bf 354} (2), 641--669 (2017).

\bibitem{MR3618047}
B\"ar, C.,  and Ballmann, W.: 
\newblock {\it Guide to elliptic boundary value problems for {D}irac-type operators}.
\newblock In {\em Arbeitstagung {B}onn 2013}, volume 319 of {\em Progr. Math.},
  pages 43--80. Birkh\"auser/Springer, Cham, 2016.

\bibitem{BFSV17}
 Benguria, R.~D., Fournais, S., Stockmeyer, E., and Van Den~Bosch, H.: 
\newblock Self-adjointness of two-dimensional {D}irac operators on domains.
\newblock {Ann. Henri Poincar\'e} {\bf 18} (4), 1371--1383 (2017).

\bibitem{MR2536846}
Boo\ss~Bavnbek, B., Lesch, M., and Zhu, C.: 
\newblock The {C}alder\'on projection: new definition and applications.
\newblock {J. Geom. Phys.} {\bf 59} (7), 784--826 (2009).

\bibitem{RevModPhys.81.109}
 Castro~Neto, A.~H., Guinea, F.,  Peres, N.~M.~R.,  Novoselov, K.~S., and  Geim, A.~K.: 
\newblock The electronic properties of graphene.
\newblock {Rev. Mod. Phys.} {\bf 81}, 109--162 (2009).

\bibitem{MIT101975}
Chodos, A.: 
\newblock Field-theoretic {{Lagrangian}} with baglike solutions.
\newblock {Phys. Rev. D} {\bf 12} (8), 2397--2406 (1975).

\bibitem{MIT101974}
Chodos, A.,  Jaffe, R.~L., Johnson, K., and  Thorn, C.~B.: 
\newblock Baryon structure in the bag theory.
\newblock {Phys. Rev. D} {\bf 10} (8), 2599--2604 (1974).

\bibitem{MIT061974}
Chodos, A.,  Jaffe, R.~L., Johnson, K.,  Thorn, C.~B., and  Weisskopf, V.~F.: 
\newblock New extended model of hadrons.
\newblock {Phys. Rev. D} {\bf 9} (12), 3471--3495 (1974).

\bibitem{PhysRevD.12.2060}
DeGrand, T.,  Jaffe, R.~L., Johnson, K., and Kiskis, J.: 
\newblock Masses and other parameters of the light hadrons.
\newblock {Phys. Rev. D} {\bf 12}, 2060--2076 (1975).

\bibitem{MR2440993}
Demengel, F.,  and Demengel, G.: 
\newblock {\em Espaces fonctionnels}.
\newblock Savoirs Actuels (Les Ulis). [Current Scholarship (Les Ulis)]. EDP
  Sciences, Les Ulis; CNRS \'Editions, Paris, 2007.
\newblock Utilisation dans la r\'esolution des \'equations aux d\'eriv\'ees
  partielles. [Application to the solution of partial differential equations].

\bibitem{MR1025228}
Dittrich, J., Exner, P., and \v{S}eba, P.: 
\newblock Dirac operators with a spherically symmetric {$\delta$}-shell
  interaction.
\newblock {J. Math. Phys.} {\bf 30} (12), 2875--2882 (1989).

\bibitem{esteban2017domains}
 Esteban, M.~J., Lewin, M., and S{\'e}r{\'e}, E.: 
\newblock Domains for dirac-coulomb min-max levels.
\newblock {To appear in Rev. Mat. Iberoam}, (2018).

\bibitem{FS14}
Freitas, P.,  and Siegl, P.: 
\newblock Spectra of graphene nanoribbons with armchair and zigzag boundary
  conditions.
\newblock {Rev. Math. Phys.} {\bf 26} (10), 1450018, 32 (2014).

\bibitem{grisvard1980boundary}
Grisvard, P.: 
\newblock {\em Boundary value problems in non-smooth domains}. Volume~19.
\newblock University of Maryland, Dept. of Mathematics, 1980.

\bibitem{MR1173209}
Grisvard, P.: 
\newblock {\em Singularities in boundary value problems}. Volume~22 of {\em
  Recherches en Math\'ematiques Appliqu\'ees [Research in Applied
  Mathematics]}.
\newblock Masson, Paris; Springer-Verlag, Berlin, 1992.

\bibitem{hosaka2001quarks}
Hosaka, A.,  and Toki, H.: 
\newblock {\em Quarks, Baryons and Chiral Symmetry}.
\newblock {World Scientific Publishing Company Incorporated}, 2001.

\bibitem{johnson}
Johnson, K.: 
\newblock The {{MIT}} bag model.
\newblock {Acta Phys. Pol.} {\bf B6}, 865--892 (1975).

\bibitem{PhysRevB.81.085432}
 Kim, S.~C.,  Park, P.~S., and  Yang, S.-R.~E.: 
\newblock States near dirac points of a rectangular graphene dot in a magnetic
  field.
\newblock {Phys. Rev. B} {\bf 81}, 085432 (2010).

\bibitem{MR0226187}
 Kondrat'ev, V.~A.: 
\newblock Boundary value problems for elliptic equations in domains with
  conical or angular points.
\newblock {Trudy Moskov. Mat. Ob\v s\v c.} {\bf 16}, 209--292 (1967).

\bibitem{necas2011direct}
Necas, J.: 
\newblock {\em Direct methods in the theory of elliptic equations}.
\newblock Springer Science \& Business Media, 2011.

\bibitem{olver2014asymptotics}
 Olver, F.~W.: 
\newblock {\em Asymptotics and special functions}.
\newblock Academic Press, 2014.

\bibitem{2016arXiv161207058O}
{Ourmi{\`e}res-Bonafos}, T.,  and {Vega}, L.: 
\newblock {A strategy for self-adjointness of Dirac operators: Applications to
  the MIT bag model and delta-shell interactions}.
\newblock {To appear in Publ. Mat.}, (2018).

\bibitem{PhysRevB.81.033403}
Potasz, P.,  G\"u\ifmmode~\mbox{\c{c}}\else \c{c}\fi{}l\"u, A.~D., and
  Hawrylak, P.: 
\newblock Zero-energy states in triangular and trapezoidal graphene structures.
\newblock {Phys. Rev. B} {\bf 81}, 033403 (2010).

\bibitem{reed-2}
Reed, M.,  and Simon, B.: 
\newblock {\em Methods of Modern Mathematical Physics}. Volume 2: {{Fourier
  Analysis}}, {{Self}}-{{Adjointness}}.
\newblock {Academic Press, New York}, 1978.

\bibitem{reed-4}
Reed, M.,  and Simon, B.: 
\newblock {\em Methods of Modern Mathematical Physics}. Volume 4: {{Analysis}}
  of Operators.
\newblock {New York - San Francisco - London: Academic Press.}, 1978.

\bibitem{reed-1}
Reed, M.,  and Simon, B.: 
\newblock {\em Methods of Modern Mathematical Physics}. Volume 1: {{Functional}}
  Analysis.
\newblock {Gulf Professional Publishing}, 1980.

\bibitem{Sch95}
 Schmidt, K.~M.: 
\newblock A remark on boundary value problems for the {D}irac operator.
\newblock {Quart. J. Math. Oxford Ser. (2)} {\bf 46} (184), 509--516 (1995).

\bibitem{PhysRevB.78.195427}
Schnez, S., Ensslin, K., Sigrist, M., and Ihn, T.: 
\newblock Analytic model of the energy spectrum of a graphene quantum dot in a
  perpendicular magnetic field.
\newblock {Phys. Rev. B} {\bf 78}, 195427 (2008).

\bibitem{2016arXiv160309657S}
{Stockmeyer}, E.,  and {Vugalter}, S.: 
\newblock {Infinite mass boundary conditions for Dirac operators}.
\newblock {To appear in J. Spectr. Theory}, (2018).

\bibitem{0957-4484-19-43-435401}
Tang, C., Yan, W., Zheng, Y., Li, G., and Li, L.: 
\newblock Dirac equation description of the electronic states and magnetic
  properties of a square graphene quantum dot.
\newblock {Nanotechnology} {\bf 19} (43), 435401 (2008).

\bibitem{Thaller1992}
Thaller, B.: 
\newblock {\em The {{Dirac}} Equation.}
\newblock {Texts and Monographs in Physics. Berlin: Springer-Verlag}, 1991.

\bibitem{MR923320}
Weidmann, J.: 
\newblock {\em Spectral theory of ordinary differential operators}. Volume 1258
  of {\em Lecture Notes in Mathematics}.
\newblock Springer-Verlag, Berlin, 1987.

\end{thebibliography}

\end{document}